\newtheorem{theorem}{Theorem} 
\newtheorem{proposition}[theorem]   {Proposition}
\newtheorem{corollary}[theorem]{Corollary}
\newtheorem{lemma}[theorem]{Lemma}
\newtheorem{remark}[theorem]{Remark}
\theoremstyle{nonumberplain}
\newtheorem{proof}{Proof}
\newcommand{\MM}{{\mathbb M}}
\newcommand{\PP}{{\mathbb P}}
\newcommand{\EE}{{\mathbb E}}
\newcommand{\FF}{{\mathbb F}}
\newcommand{\D}{{\cal D}}
\newcommand{\N}{{\cal N}}
\newcommand{\X}{{\cal X}}
\newcommand{\Pa}{{\cal P}}
\newcommand\ind{{\bf 1}}
\newcommand\uzeta{\underline\zeta}
\newcommand\nA{\hbox{\rm A}}
\newcommand\nC{\hbox{\rm C}}
\newcommand\nG{\hbox{\rm G}}
\newcommand\nT{\hbox{\rm T}}
\newcommand\kd{d_{\hbox{k}}}
\newcommand\chid{d_{\chi^2}}
\newcommand\hz{\kappa}
\newcommand{\figref}[1]{Fig.~\ref{#1}}
\title{An Entropy-Based Technique for Classifying Bacterial Chromosomes According to
Synonymous Codon Usage}
\author{%
Andrew Hart%
\thanks{Centro de Modelamiento Matem\'atico, UMI 2071 CNRS-UCHILE, Facultad de Ciencias
F\'isicas y Matem\'aticas, Universidad de Chile,
Casilla 170, Correo 3, Santiago, Chile \newline
E-mail: \texttt{ahart@dim.uchile.cl}, Tel.: +56\,2\,2978\,4600, FAX:
+56\,2\,2688\,3821}
\and
Servet Mart{\'\i}nez%
\thanks{Departamento de Ingenier{\'\i}a Matem\'atica and Centro de 
Modelamiento Matem\'atico, UMI 2071 CNRS-UCHILE, Facultad de Ciencias
F\'isicas y Matem\'aticas, Universidad de Chile,
Casilla 170, Correo 3, Santiago, Chile \newline
E-mail: \texttt{smartine@dim.uchile.cl}}
}%
 \date{13 October 2015}
\begin{document}

\maketitle

\begin{abstract}
We present a framework based on conditional entropy and the Dirichlet 
distribution for classifying chromosomes based on the degree to which 
they use synonymous codons uniformly or preferentially, that is, whether 
or not codons that code for an amino acid appear with the same relative 
frequency. Applying
the approach to a large collection of annotated bacterial chromosomes 
reveals three distinct groups of bacteria.
\end{abstract}

\medskip
\noindent {\small
Keywords: entropy; conditional entropy; Dirichlet distribution;
annotated bacteria.

\medskip\noindent
Mathematics Subject Classification (2010): 60F10; 60G42; 92D20; 62P10.
}

\bigskip

\section{Introduction}
\label{sec:intro}

Living cells use the genetic code to translate triples of nucleic acids 
called codons into amino acids, the building blocks of proteins.  There 
are a total of 64 codons of which 61 code for 20 amino acids while the 
remaining 3 constitute translation stop signals. This many-to-one mapping 
between codons and amino acids means the genetic code is degenerate. As 
almost all amino acids are represented by two to six different codons, the 
code possesses intrinsic redundancy. this provides the cellular machinery 
with the ability to correctly manufacture protein products in the presence 
of certain kinds of transcription/translation/replication errors. However, 
the way in which codons are used to represent amino acids varies from gene 
to gene and from organism to organism. The mechanism by which variations 
in these usage patterns arise is not clearly understood, though a number 
of factors that can influence codon usage are known. these include 
mutational biases, translational selection pressures, GC content at the 
third codon site (GC3s) and gene size. The pattern in the way amino acids 
are represented by codons is called synonymous codon usage (SCU) and 
inequity in the distribution of codons which code for the same amino acid 
is referred to as codon usage bias (CUB).

\medskip

As noted above, patterns of SCU are specific to each organism and 
their study is of significant biological interest. Consequently, 
there is a substantial body of literature devoted to studying SCU 
in the genes of organisms. \citet{comeron&aguade1998} give a brief 
review of methods for measuring SCU, both general and species-specific. 
Two useful measures are the codon adaptation index \citep{sharp&lee1987b} 
and the `effective number of codons' \citep{wright1990}. A number of 
other measures have been subsequently posed such as relative codon 
adaptation \citep{fox&erill2010}, various modifications to the 
`effective number of codons' \citep{fuglsang2006,banerjee&etal2005} and 
the measure of gene expression $E(G)$ which was used 
by~\citet{karlin&etal2001} to characterize predicted highly 
expressed genes in four bacteria.

\medskip

Here, we develop a method for assessing codon usage bias based on concepts 
taken from information theory and apply it to a large set of bacterial 
chromosomes. As our aim is to explore and identify trends in global SCU in 
a large collection of organisms, our approach differs from that which is 
typically taken when studying SCU in a chromosome.  Rather than examining 
a chromosome gene by gene, we aggregate all the genes annotated for the 
chromosome together and compute total relative frequencies for codons and 
amino acids which are then plugged into our measure.

\medskip  

To set the scene more concretely, let $\{\nA,\nC,\nG,\nT\}$ be the 
alphabet of nucleotides in DNA. Being triplets of nucleotides, codons 
are elements of $\{\nA,\nC,\nG,\nT\}^3$ and $61$ of the $64$ possible 
codons code for amino acids. The remaining $3$ codons 
(TAA, TAG and TGA) indicate a STOP condition that is transcribed 
into messenger RNA and which instructs the ribosome to halt the 
translation of a sequence of codons into a polypeptide chain. 

\medskip

Each amino acid can be represented by any of its so-called synonymous 
codons, which are those codons that code for it.  For instance, the four 
codons GCA, GCG, GCT and GCC all code for the amino acid alanine (Ala). 
Thus the list of amino acids defines a partition 
$\alpha^*=\{A^*_d: d=1,..,20\}$ 
on the set of $61$ codons. Each amino acid is associated with a 
class $A^*_d$ which is the set of its synonymous codons. For example, 
if $A_1^*$ corresponds to Ala, then 
$A_1^*=\{\text{GCA}, \text{GCG}, \text{GCT}, \text{GCC}\}$. 
Further, the collection of $20$ amino acids can be organized 
according to the number of synonymous codons each has: there are 
$5$ amino acids with $4$ synonymous codons, $3$ with $6$ codons, 
$9$ with $2$ codons, $1$ with $3$ codons and $2$ with just $1$ codon.

\medskip

We seek a measure or statistic~$\Delta$ that captures the degree to 
which the set of amino acids are represented equally often by their 
synonymous codons.  If we suppose that there are $k_d$ codons that code 
for amino acid $A_d$, then $A_d$ will exhibit no CUB if each of 
its $k_d$ synonymous codons appears with relative frequency $1/k_d$. 
Extending this, it should be clear that a complete chromosome will 
exhibit no CUB provided that 
every codon~$C$ appears in the chromosome with a relative frequency 
equal to the reciprocal of the number of codons that code for the same 
amino acid as~$C$. In order for ~$\Delta$ to be useful, it should possess 
a number of desirable properties. Firstly, a complete absence of CUB, 
the ideal case just described, should be indicated by the reference 
value of~$0$. Secondly, the most extreme form of CUB where each amino 
acid is always represented by the same codon should correspond to the 
maximum value of~$\Delta$. 
Thirdly, larger values of~$\Delta$ should 
correspond to greater concentrations of the codon distribution on a 
smaller number of codons, with the extreme case being one codon per 
amino acid. 
Finally, it should be independent of the amino acid composition 
so that comparisons can be made between chromosomes and/or organisms.

\medskip

The following sections develop such a measure in a very general 
framework, replacing the set of codons by an arbitrary finite 
set~$I$ of symbols and the amino acid partition $\alpha^*$ by 
a general partition $\alpha$ of~$I$. Section \ref{sec2} begins by 
presenting information 
theoretic concepts such as partitions, entropy, conditional entropy and 
maximal entropy of probability measures when fixing a probability 
measure over some partition and defines a statistic~$\Delta$ for CUB. The
relationship of~$\Delta$ to probabilistic measures such as
the Kullback and $\chi^2$ distances is investigated.  It is also
compared to homozygosity, a well-known measure of genetic similarity.
Section \ref{sec3} deals with computing the mean entropy under a 
Dirichlet distribution assumption on the use of symbols belonging to the 
same atom. This enables the expected value of ~$\Delta$ 
to be computed directly from the relative frequencies of the 
amino acids. In Section \ref{sec4}, we obtain some large 
deviation bounds with respect to the Dirichlet distribution assumption. 

\medskip

The final section applies the proposed entropy-based approach to a set 
of $2535$ annotated bacterial chromosomes and we discover that the 
bacteria can be divided into three broad groups based on 
their SCU behavior. In the largest group, which contains $1587$ bacteria, 
amino acids are represented uniformly by their synonymous codons. 
The next largest group consists of $592$ bacteria that exhibit a very 
high degree of CUB, indicating an extreme preference for a 
small number of codons. the third and smallest group has $356$ bacteria 
whose SCU bias is moderate.

\section{Partitions, Entropy and Maximal Entropy} 
\label{sec2}

Let $I$ be a finite set and denote its cardinality by $|I|$.
A partition $\alpha=(A: A\in \alpha)$ of $I$ is such that its elements,
which we shall call atoms, are non-empty, disjoint and cover $I$, 
that is,
\begin{align*}
A \neq \emptyset \text{ for $A\in \alpha$} 
&& \text{ (non-emptiness);} \\
A\cap A' = \emptyset \text{ for $A\neq A'$, $A,A'\in \alpha$} 
&& \text{ (disjointness);} \\
I =\bigcup_{A\in \alpha} A 
&& \text{ (coverage).}  
\end{align*}
For an element $i\in I$ we use $A^{i,\alpha}\in \alpha$ to denote the 
unique atom of $\alpha$ that contains $i$.

\medskip

A partition $\beta=(B: B\in \beta)$ is said to be finer than $\alpha$ (or
$\alpha$ is coarser than $\beta$), written
$\beta\succeq \alpha$, if every atom of $\alpha$ is a union of atoms of 
$\beta$. Let $\beta\succeq \alpha$. 
For every $B\in \beta$ there is a unique atom 
$A\in \alpha$ containing $B$ and we denote this $A$ by $A^{B,\alpha}$.
For $A\in \alpha$, we define
$$
N^{\beta}(A)=|\{B\in \beta: B\subseteq A\}|,
$$
the number of atoms of $\beta$ contained in $A$.
So, for $B\in \beta$, $N^\beta(A^{B,\alpha})$ is the number of atoms
of $\beta$ which are in the atom of $\alpha$ containing $B$.
In the following, an increasing sequence of partitions should be understood
to mean that the sequence of partitions is increasing with respect to the order
$\preceq$.

\medskip

The finest partition is the discrete one $\D=\{\{i\}: i\in I\}$ 
while the coarsest is the trivial one $\N=\{I\}$. 
Let $\sigma(\alpha)$ denote the field generated by partition $\alpha$, that
is, the class of sets that are unions of atoms of $\alpha$.  
We denote the set of probability measures 
on $(I,\sigma(\alpha))$ by $\Pa(I,\sigma(\alpha))$. 

\medskip 

Once and for all, we fix two partitions $\alpha$ and $\beta$ 
satisfying $\beta\succ \alpha$. Also we fix a probability 
distribution $q\in \Pa(I,\sigma(\alpha))$. This distribution
can be represented by a vector
$q=(q_A: A\in \alpha)$ satisfying $q_A\ge 0$ for
$A\in \alpha$ and $\sum_{A\in \alpha}q_A=1$.
Its Shannon entropy $h(\alpha,q)$ is given by
\begin{equation}
\label{eq1}
h(\alpha,q)=-\sum_{A\in \alpha} q_A \log q_A.
\end{equation}	
A probability measure $r\in \Pa(I,\sigma(\beta))$ 
extends $q$ if $r|_{\sigma(\alpha)}=q$, that is if
$q_A=\sum_{B\in \beta: B\subseteq A}r_B$ 
for all $A\in \alpha$.
We write this extension relation as $r\succeq q$.
In this case we have the increasing property
$h(\alpha,q)\le h(\beta,r)$ and thus the conditional entropy is
\begin{equation}
\label{eqII3}
h_r(\beta \,|\, \alpha)=h(\beta,r)-h(\alpha,q)
=-\sum_{A\in \alpha} q_A \left(\sum_{B\in \beta: B\subseteq A}
\frac{r_B}{q_A}\log \frac{r_B}{q_A}\right).
\end{equation}
Denote the set of extensions of~$Q$ by
$$
\Pa(I,\sigma(\beta) \,|\, q)=\{r\in \Pa(I,\sigma(\beta)): r\succeq q\}.
$$
From now on, we shall assume that $r\in \Pa(I,\sigma(\beta) \,|\, q)$. 

Next, define $r^M\in \Pa(I,\sigma(\beta))$ by
$$
\forall B\in \beta: \quad r^M_B=N^\beta(A^{B,\alpha})^{-1}
q_{A^{B,\alpha}},
$$
that is, $r^M$ gives the same weight to the atoms of $\beta$
which are contained in the same atom of $\alpha$.
An easy computation shows that 
$r^{M}\in \Pa(I,\sigma(\beta) \,|\, q)$
and its entropy is
$$
h(\beta,r^M)=\sum_{A\in \alpha} q_A\log N^\beta(A)+h(\alpha,q).
$$
Since the function $-x\log x$ is concave on the
interval $[0,1]$, Jensen's inequality implies that $r^M$
maximizes the entropy over the set of probability measures that
extend $q$, that is,
$$
h(\beta,r^M)=\max\{h(\beta,r): 
r\in \Pa(I,\sigma(\beta) \,|\, q)\}.
$$

Once again letting $r\in \Pa(I,\sigma(\beta) \,|\, q)$, we define
\begin{equation}
\label{eqVI3}
\Delta^q(\beta,r)=h(\beta,r^{M})-h(\beta,r).
\end{equation}
In other words, $\Delta^q(\beta,r)$ is the difference between 
the maximal conditional entropy 
over the partition $\beta$ and the entropy of 
$r$ over $\beta$, both extending $q$. Then,
\begin{eqnarray}
\nonumber
\Delta^q(\beta,r) &=& \sum_{A\in \alpha} q_A\log N^\beta(A) +
h(\alpha,q) - h_r(\beta \,|\, \alpha) -h(\alpha,q) \\
\label{eqII1}
&=& \sum_{A\in \alpha} q_A \log N^\beta(A)- h_r(\beta \,|\, \alpha).
\end{eqnarray}

\medskip

Now, the statistic~$\Delta$ we shall use for assessing a genome's CUB 
will be
the special case of $\Delta^q(\beta,r)$ in which~$\beta$ is set to the
discrete partition of codons~$\D$, that is,
\begin{equation}
\label{eqn:delta.stat}
\Delta = \Delta^q(\D,r).
\end{equation}
As such, $\Delta$ is the difference between two entropies.
The first is the maximum conditional entropy over the discrete
partition~$\D$ of codons for probability measures extending the
amino acid composition~$q$ and the second is the entropy of the codon
distribution~$r$ which also extends~$q$.

\medskip

This statistic satisfies a number of properties which make it useful for
assessing CUB.  Being based on entropy, it is a natural quantity for
measuring departure from equal usage. It takes the value zero if and
only if~$r$ is constant on each atom of~$\alpha$, a configuration
concordant with unbiased SCU. The maximum value that~$\Delta$ can
take is $\sum_{A\in \alpha} q_A\log N^\beta(A)$. By the definition
of~$\Delta$, this maximum value corresponds to a conditional entropy of 
zero, that is, $h_r(\D\,|\,\alpha)=0$, and this means there is precisely 
one codon
representing each amino acid. Finally, fixing the distribution of
amino acids in the definition of~$\Delta$ has the effect of removing the
influence of amino acid composition so that~$\Delta$ can be used to 
compare the degree of CUB in coding sequences from different 
chromosomes/species.

\subsection{Relationship to Kullback and $\chi^2$ distances}

The quantity $\Delta^q(\beta,r)$ can be characterized in terms
of the Kullback `distance'. We recall that
if $\nu$ and $\mu$ are two probability measures on a measurable space 
with $\mu \ll \nu$ then the Kullback distance is 
$\kd(\mu,\nu)=\left( \int f \, \log f \, d\nu \right)^{1/2}$,
where $f=d\mu/\,d\nu$ is the Radon-Nikodym derivative of~$\mu$ with 
respect to~$\nu$.
 
\begin{proposition} 
For $r\in  \Pa(I,\sigma(\beta) \, | \, q)$, 
$\Delta^q(\beta,p)=\kd(r,r^M)^2$.
\end{proposition}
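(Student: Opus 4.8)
The plan is to evaluate $\kd(r,r^M)^2$ directly from its definition and match it term by term with the expression for $\Delta^q(\beta,r)$ given in \eqref{eqVI3}. Everything lives on the finite set $I$ with the field $\sigma(\beta)$, so the Radon--Nikodym derivative $f=dr/dr^M$ is just the atom-wise ratio $f(i)=r_B/r^M_B$ for $i\in B\in\beta$. First I would check that $r\ll r^M$, so that $\kd(r,r^M)$ is meaningful: if $r_B>0$ then $q_{A^{B,\alpha}}\ge r_B>0$, hence $r^M_B=N^\beta(A^{B,\alpha})^{-1}q_{A^{B,\alpha}}>0$; atoms on which $q$ (and therefore $r$) vanishes are handled by the convention $0\log 0=0$. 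Then
$$
\kd(r,r^M)^2=\int f\log f\,dr^M=\sum_{B\in\beta}r_B\log\frac{r_B}{r^M_B}
=\sum_{B\in\beta}r_B\log r_B-\sum_{B\in\beta}r_B\log r^M_B .
$$

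The first sum is $-h(\beta,r)$ by the definition of entropy, so it remains to show the second sum equals $-h(\beta,r^M)=\sum_{B\in\beta}r^M_B\log r^M_B$; combined with \eqref{eqVI3} this gives the claim. The key observation is that $\log r^M_B=\log q_{A^{B,\alpha}}-\log N^\beta(A^{B,\alpha})$ depends on $B$ only through the atom $A^{B,\alpha}\in\alpha$ containing it, i.e. $B\mapsto\log r^M_B$ is $\sigma(\alpha)$-measurable. I can therefore regroup the sum over $\beta$ according to $\alpha$:
$$
\sum_{B\in\beta}r_B\log r^M_B
=\sum_{A\in\alpha}\Big(\log\frac{q_A}{N^\beta(A)}\Big)\sum_{B\in\beta:\,B\subseteq A}r_B
=\sum_{A\in\alpha}q_A\log\frac{q_A}{N^\beta(A)},
$$
where the inner sum collapses to $q_A$ precisely because $r\in\Pa(I,\sigma(\beta)\,|\,q)$ extends $q$. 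Running the identical computation with $r^M$ in place of $r$ — and using that $r^M$ also extends $q$ — produces the same value for $\sum_{B\in\beta}r^M_B\log r^M_B$. Hence the two sums coincide and
$$
\kd(r,r^M)^2=\sum_{B\in\beta}r_B\log r_B-\sum_{B\in\beta}r^M_B\log r^M_B=h(\beta,r^M)-h(\beta,r)=\Delta^q(\beta,r).
$$

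There is no real obstacle: the whole argument is the elementary fact that the $r$-mean and the $r^M$-mean of a $\sigma(\alpha)$-measurable function agree whenever $r$ and $r^M$ share the same $\sigma(\alpha)$-marginal $q$, applied to the function $\log r^M$. The only points needing a word of care are the absolute-continuity check above that makes $\kd(r,r^M)$ well defined, and noting explicitly that $\log r^M_B$ is constant on the atoms of $\beta$ lying in a common atom of $\alpha$, which is immediate from the definition of $r^M$.
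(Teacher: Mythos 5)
Your proof is correct and takes essentially the same route as the paper: compute the Radon--Nikodym derivative atom-wise on $\sigma(\beta)$, expand $\kd(r,r^M)^2=\sum_{B\in\beta}r_B\log(r_B/r^M_B)$, and regroup the sum over the atoms of $\alpha$ using that $r$ extends $q$. The differences are cosmetic --- you identify the cross term directly with $-h(\beta,r^M)$ via the $\sigma(\alpha)$-measurability of $\log r^M$ instead of passing through $h(\alpha,q)+\sum_{A\in\alpha}q_A\log N^\beta(A)$ as the paper does, and you add a brief absolute-continuity check that the paper leaves implicit.
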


\begin{proof}
For $i\in B\in \beta$ we have
$$
\frac{d r}{d r^M}(i)=N^\beta(A^{B,\alpha})\frac{r_B}{q_A},
$$
so
$$
\kd(r,r^M)^2 = \sum_{A\in \alpha} 
\sum_{B\in \beta: B\subseteq A}
N^\beta(A^{B,\alpha})\frac{r_B}{q_A}
\log\left(N^\beta(A^{B,\alpha})\frac{r_B}{q_A}\right)
\frac{q_A}{N^\beta(A^{B,\alpha})}.
$$
Then
\begin{eqnarray*}
\kd(r,r^M)^2 &=& h(\alpha,q) - h(\beta,r) +
\sum_{A\in\alpha}\sum_{B\in\beta: B\subseteq A} r_B\log N^\beta(A) \\ 
&=& h(\alpha,q) - h(\beta,r) + \sum_{A\in \alpha}q_A\log
N^\beta(A)
=\Delta^q(\beta,r).
\end{eqnarray*}
\end{proof}

Let us recall the $\chi^2$ distance, which in the 
setting of $\mu \ll \nu$ and  $f=d\mu/ \, d\nu$
satisfies
$d_{\chi^2}(\mu,\nu)=\left( \int (f-1)^2 d\nu\right)^{1/2}$.
From a general formula \citep[see][]{gibbs&su2002}, we have 
\begin{equation}
\label{nva1}
\kd(\mu,\nu)\le d_{\chi^2}(\mu,\nu).
\end{equation}
In our case the $\chi^2$ distance is given by
\begin{eqnarray*}
d_{\chi^2}(r,r^M)^2 &=& \sum_{A\in \alpha} \sum_{B\in \beta: B\subseteq A}
\left(N^\beta(A)\frac{r_B}{q_A} -1 \right)^2
\frac{q_A}{N^\beta(A^{B,\alpha})} \\
&=&\sum_{A\in \alpha} \frac{q_A}{N^\beta(A)} 
\sum_{B\in \beta: B\subseteq A} 
\left( N^\beta(A)\frac{r_B}{q_A} -1\right)^2 \\
&=& \sum_{A\in \alpha} \frac{q_A}{N^\beta(A)} 
\left(\sum_{B\in \beta: B\subseteq A}
\left(N^\beta(A)\frac{r_B}{q_A}\right)^2-2N^\beta(A) + N^\beta(A) \right) 
\\
&=& \sum_{A\in \alpha} \frac{q_A}{N^\beta(A)}
\sum_{B\in \beta: B\subseteq A}  
\left(N^\beta(A)\frac{r_B}{q_A}\right)^2 - 1 \\
&=&\sum_{A\in \alpha} q_A N^\beta(A) 
\sum_{B\in \beta: B\subseteq A} \left(\frac{r_B}{q_A}\right)^2 - 1.
\end{eqnarray*}

The $\chi^2$ distance can be related to the homozygosities of the atoms
of~$\alpha$. Statistical estimators of homozygosity were used previously to
construct a measure of CUB called the ``effective number of codons'' 
\citep{wright1990}. The homozygosity of each atom $A\in\alpha$ is
$$
\hz_A(\beta) = \sum_{B\in\beta : B\subseteq A} 
\left(\frac{r_B}{q_A} \right)^2.
$$
Then,
$$
\chid(\mu,\nu)^2
=\sum_{A\in \alpha} q_A N^\beta(A) 
\sum_{B\in \beta: B\subseteq A} \left(\frac{r_B}{q_A}\right)^2 - 1 
=\sum_{A\in \alpha} q_A N^\beta(A) \hz_A(\beta) - 1.
$$
Consequently, from (\ref{nva1}) we get 
$\Delta(\beta,r) \le \sum_{A\in \alpha} q_A N^\beta(A) \hz_A(\beta) - 1$.

\section{Conditional entropy on Dirichlet distributions}
\label{sec3}

First, we write everything in terms of a probability measure 
$p\in \Pa(I,\sigma(\D))$. This measure is characterized as a vector
$p=(p_i: i\in I)$ which satisfies $p_i\ge 0$ for $i\in I$ and
$\sum_{i\in I}p_i=1$. Then, $p(J)=\sum_{i\in J}p_i$ for $J\subseteq I$
and $p(\{i\})=p_i$. 

\medskip

We will take $p\in \Pa(I,\sigma(\D) \, | \, q)$ to be an extension of $q$.
Let $r=p|_{\sigma(\beta)}$ be the
restriction of $p$ to $\sigma(\beta)$ so 
$r\in \Pa(I,\sigma(\beta) \, | \, q)$. We write
$h_p(\beta)=h_r(\beta)$, 
$h_{p}(\beta \,|\, \alpha)=h_r(\beta \,|\, \alpha)$ and   
$\Delta^q(\beta,p)=\Delta^q(\beta,r)$. 

\medskip

We would like to have an idea of how $\Delta^q(\beta,p)$ behaves
probabilistically. Toward this end, we shall place a probability law 
on $\Pa(I,\sigma(\D)\,|\, q)$ that captures our a priori ignorance 
about the measure~$p$ and compute $\Delta^q(\beta,p)$ when~$p$ is chosen 
according to this law. Since the first term of $\Delta^q(\beta,p)$ in 
(\ref{eqII1}) does not depend on $p$, we need only examine the behavior 
of the quantity $h_p(\beta \,|\, \alpha)$.

\medskip

To begin, note that any 
$p\in \Pa(I,\sigma(\D) \, | \, q)$ is an element of 
$\Pa(I,\sigma(\D))$ which satisfies $\sum_{i\in A}p_i=q_A$
for $A\in \alpha$. So $p$ is characterized by the set of 
probability vectors of the form
$$
\left(q_A^{-1}(p_i: i\in A): A\in \alpha\right).
$$
Here $q_A^{-1}(p_i: i\in A)=(p_i/q_A: i\in A)$ is a 
probability vector that 
takes values in $S_{|A|-1}$, the simplex of dimension $|A|-1$.

\medskip

We fix the distribution $\PP$ on $\Pa(I,\sigma(\D) \, | \, q)$ as a 
product of Dirichlet distributions with its support constrained to the 
set of extensions of~$q$. More precisely, the random vector 
$P=(P_i: i\in I)$ which takes values in $\Pa(I,\sigma(\D) \,|\, q)$ 
is such that
\begin{equation}
\label{eqI2}
X(A)=q_A^{-1}(P_i: i\in A) \sim \,
\text{Dirichlet}\,(\underbrace{1,..,1}_{|A| \, \text{times}}),
\end{equation}
and $(X(A): A\in \alpha)$ are independent random vectors.
We recall that the density for 
$(Y_1,..,Y_a)\sim \, \text{Dirichlet}\, (\underbrace{1,..,1}_{a\, 
\text{times}})$ is 
$f_{Y_1,..,Y_a}(y_1,..,y_a)=(a-1)!\ind((y_1,..,y_a)\in S_{a-1})$.  

\medskip

We can now compute the value $\EE\bigl(\Delta^q(\beta,P)\bigr)$
with respect to this distribution. 

\begin{theorem}
Fix a partition~$\alpha$ and a probability distribution~$q$ on~$\alpha$. 
Let $\beta \succeq \alpha$ and let $P$ be a random probability vector 
in $\Pa(I,\sigma(\D) \,|\, q)$ which is distributed according to 
(\ref{eqI2}). Then 
\begin{equation}
\label{eqII2}
\EE(h_P(\beta \,|\, \alpha)) 
=\sum_{A\in\alpha}q_A \xi_{|A|}-\sum_{A\in\alpha}
\frac{q_A}{|A|} 
\left(\sum_{B\in\beta: B\subseteq A}|B|\xi_{|B|}\right),
\end{equation}
where $\xi_n=\sum_{j=1}^n j^{-1}$ denotes the $n^{\rm th}$ harmonic
number.
\end{theorem}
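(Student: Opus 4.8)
The plan is to compute the expectation by expanding the conditional entropy $h_P(\beta\mid\alpha)$ atom by atom of $\alpha$, since formula (\ref{eqII3}) already decomposes it as a $q$-weighted sum over $A\in\alpha$ of internal entropy terms, and the Dirichlet blocks $X(A)$ are independent. Fix $A\in\alpha$ with $|A|=n$. Writing $P_i/q_A$ for $i\in A$ as the coordinates of $X(A)\sim\text{Dirichlet}(1,\dots,1)$, and grouping the codons $i\in A$ according to which atom $B\in\beta$, $B\subseteq A$, they belong to, one sees that $r_B/q_A=\sum_{i\in B}P_i/q_A$ is a sum of $|B|$ coordinates of a uniform Dirichlet vector, hence itself $\text{Beta}(|B|,n-|B|)$-distributed. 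Thus the contribution of atom $A$ to $h_P(\beta\mid\alpha)$ is $-\sum_{B\subseteq A}\,(r_B/q_A)\log(r_B/q_A)$, and taking expectations reduces the whole problem to evaluating, for each $B\subseteq A$, the single scalar expectation $\EE\bigl[-W\log W\bigr]$ where $W\sim\text{Beta}(|B|,n-|B|)$.

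The key computational lemma is therefore: if $W\sim\text{Beta}(k,m-k)$ then $\EE[W\log W]=\frac{k}{m}\bigl(\xi_k-\xi_m\bigr)$, with $\xi_n=\sum_{j=1}^n 1/j$ the $n$th harmonic number (and with the convention that the $\text{Beta}(n,0)$ case, i.e. $W\equiv 1$, gives $\EE[W\log W]=0=\frac{n}{n}(\xi_n-\xi_n)$, so the formula is uniform). This I would prove by the standard device of differentiating the Beta normalizing constant: $\EE[W^{s}]=\frac{B(k+s,m-k)}{B(k,m-k)}=\frac{\Gamma(k+s)\Gamma(m)}{\Gamma(k)\Gamma(m+s)}$, and $\EE[W\log W]=\frac{d}{ds}\EE[W^{s}]\big|_{s=1}$. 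Differentiating gives $\EE[W\log W]=\frac{k}{m}\bigl(\psi(k+1)-\psi(m+1)\bigr)$ where $\psi$ is the digamma function, and then one uses $\psi(n+1)-\psi(1)=\xi_n$ so that $\psi(k+1)-\psi(m+1)=\xi_k-\xi_m$. (Equivalently, one can avoid $\psi$ entirely: since the integrand is supported on $[0,1]$, expand using $-\log W=\int_0^1(1-W)^{j-1}$-type identities, or just verify the harmonic-number closed form by the telescoping recursion in $m$.)

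Assembling the pieces: for each $A\in\alpha$ and each $B\subseteq A$, $\EE\bigl[-(r_B/q_A)\log(r_B/q_A)\bigr]=\frac{|B|}{|A|}\bigl(\xi_{|A|}-\xi_{|B|}\bigr)$. Summing over $B\subseteq A$ and using $\sum_{B\subseteq A}|B|=|A|$ gives the contribution of atom $A$ as $\xi_{|A|}-\frac{1}{|A|}\sum_{B\subseteq A}|B|\xi_{|B|}$; multiplying by $q_A$ and summing over $A\in\alpha$ yields exactly (\ref{eqII2}). The only genuine obstacle is the scalar Beta expectation $\EE[W\log W]$ — specifically, carrying out the differentiation of the Gamma-ratio cleanly and recognizing the digamma differences as harmonic numbers; everything else is bookkeeping that follows from the additivity of (\ref{eqII3}) over atoms and the independence in (\ref{eqI2}). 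One should also note for rigor that $-W\log W$ is bounded on $[0,1]$, so all expectations are finite and the interchange of $\EE$ with the finite sums over $A$ and $B$ is trivially justified.
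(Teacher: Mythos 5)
Your proposal is correct, and its overall reduction is the same as the paper's: both decompose $\EE(h_P(\beta\mid\alpha))$ via (\ref{eqII3}) into a $q$-weighted sum over atoms $A\in\alpha$, use the aggregation property of the Dirichlet$(1,\dots,1)$ law in (\ref{eqI2}) to conclude that $P(B)/q_A\sim\mathrm{Beta}(|B|,|A|-|B|)$ for $B\subseteq A$, and thereby reduce the theorem to the scalar expectation $\EE(W\log W)$ for a Beta variable with integer parameters; your bookkeeping with $\sum_{B\subseteq A}|B|=|A|$ then reproduces (\ref{eqII2}) exactly. Where you genuinely differ is in how that scalar expectation is evaluated. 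The paper does it in a self-contained, elementary way: it sets $\theta(s,t)=\int_0^1 x^s(1-x)^t\log x\,dx$ and runs an integration-by-parts recursion in $t$ that telescopes into harmonic numbers, giving $\EE(W\log W)=-\frac{l}{l+m}\left(\xi_{l+m}-\xi_l\right)$ for $(W,1-W)\sim\mathrm{Dirichlet}(l,m)$. You instead differentiate the moment function $\EE(W^s)=\Gamma(k+s)\Gamma(m)/\bigl(\Gamma(k)\Gamma(m+s)\bigr)$ at $s=1$ and convert the digamma difference to harmonic numbers via $\psi(n+1)-\psi(1)=\xi_n$; this is shorter, standard, and works verbatim for non-integer parameters, at the cost of importing the digamma function and a (routine, since $|x^s\log x|$ is uniformly bounded on $[0,1]$ for $s$ near $1$) justification of differentiating under the integral sign, which you should state explicitly for full rigor. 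Your explicit treatment of the degenerate case $B=A$ (i.e.\ $\mathrm{Beta}(n,0)$, $W\equiv1$) is a detail the paper leaves implicit, and your final assembly matches the paper's.
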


\begin{proof}
To begin, we recall a standard construction of 
$(Y_j: j=1,..,a) \sim\,\text{Dirichlet}\,(c_1,..,c_a)$
for positive real numbers $c_j$, $j=1,..,a$ \citep[see for 
instance Section 2.2.1 in][]{bertoin2006}. We take
independent random variables $G_j,$ $j=1,..,a$ with $G_j$ distributed as 
Gamma$(c_j,1)$. Then, 
$\bigl(G_j\big/\sum_{i=1}^a G_i: j=1,..,a\bigr)\sim \,$ 
Dirichlet$\,(c_1,..,c_a)$ which is independent of 
$\sum_{i=1}^a G_i\sim\,$Gamma$(\sum_{j=1}^a c_j,1$).

\medskip

Now, if $(Y_j: j=1,..,a)\sim\,$ Dirichlet$\,(c_1,...,c_a)$ with
$c_1,..,c_a>0$, $(J_i: i=1,..,k)$ is a partition of $\{1,..,a\}$ and
$Z_i=\sum_{j\in J_i} Y_j$ for $i=1,..,k$, then it follows that
$(Z_i: i=1,..,k)\sim \,$ Dirichlet$\,\left(\sum_{j\in J_i} c_j:
i=1,..,k\right)$. 
As a special case of this, we have
\begin{equation}
\label{eqIII1}
(Z_i,1-Z_i)\sim \,\text{ Dirichlet }\left(\sum_{j\in J_i} c_j, 
\sum_{j\in \{1,..,a\}\setminus J_i} c_j\right).
\end{equation}
Recalling that $P(B)=\sum_{i\in B}P_i$, (\ref{eqII3}) yields
$$
h_P(\beta \,|\, \alpha)=-\sum_{A\in \alpha} q_A
\left(\sum_{B\in \beta: B\subseteq A}
\frac{P(B)}{q_A}\, \log \frac{P(B)}{q_A}\right)
$$
and combining distributions~(\ref{eqI2}) and~(\ref{eqIII1}) 
enables us to conclude that 
$$
\left(\frac{P(B)}{q_A},1-\frac{P(B)}{q_A}\right) \sim \,
\text{Dirichlet}(|B|, |A|-|B|)
$$ 
for $B\in \beta$, $B\subseteq A\in \alpha$. Then,
\begin{equation}
\label{eqII4}
\EE(h_P(\beta \,|\, \alpha))= -\sum_{A\in \alpha} q_A
\left(\sum_{B\in \beta: B\subseteq A} 
\EE(W_{B,A}\log W_{B,A})\right),
\end{equation}
where $W_{B,A}\sim \hbox{Dirichlet}\,(|B|,|A|-|B|)$.

\medskip

We need to compute $\EE(W \log W)$ for
$(W,1-W)\sim\,\text{Dirichlet}\,(l,m)$. Firstly,
$$
\EE\left(W \, \log W\right)=
\frac{(l+m-1)!}{(l-1)!\, (m-1)!}\int_0^1 x \, (\log x) \, 
x^{l-1}(1-x)^{m-1}dx.
$$
For integers $s,t\ge 0$, define
$$
\theta(s,t)=\int_0^1 x^s \, (1-x)^{t} \, \log x \, dx.
$$
This notation allows us to write
$$
\EE\left(W \log W\right)=
\frac{(l+m-1)!}{l-1)!\, (m-1)!}\theta(l,m-1).
$$
Thus, it is only necessary to compute $\theta(s,t)$, which will be done 
by successively integrating by parts and by making use of the formula
$$
\frac{d}{dx}A_s(x)=x^s \log x
\text{ where } A_s(x)=\frac{x^{s+1}\, \log x}{s+1}-\frac{x^{s+1}}{(s+1)^2}.
$$
To begin, we have $A_s(0)=0$ 
and $A_s(1)=-\frac{1}{(s+1)^2}$. Next for $t=0$ we have 
$\theta(s,0)=A_s(x)|_0^1=-\frac{1}{(s+1)^2}$.
Assume $t>0$. Since $A_s(x)(1-x)^t|_0^1=0$ we obtain
$$
\theta(s,t)=t \int_0^1 A_s(x)(1-x)^{t-1}dx=
\frac{t}{s+1}\theta(s+1,t-1)-
\frac{t}{(s+1)^2}\int_0^1 x^{s+1}(1-x)^{t-1}dx \,.
$$
If $t=1$ we obtain 
\begin{equation}
\label{eqI1}
\theta(s,1)=\frac{1}{s+1}\theta(s+1,0)-\frac{1}{(s+1)^2(s+2)}
=-\frac{(2s+3)}{(s+1)^2(s+2)^2} \,.
\end{equation}
If $t>1$ we find that
\begin{eqnarray*}
\theta(s,t)&=& \frac{t}{s+1}\theta(s+1,t-1)-
\frac{t}{(s+1)^2}\frac{(s+1)!(t-1)!}{(s+1+t)!}\\
&=&\frac{t}{s+1}\theta(s+1,t-1)-
\frac{1}{(s+1)}\frac{s!t!}{(s+1+t)!}.
\end{eqnarray*}
Then by iterating and by using (\ref{eqI1}), we  obtain
\begin{eqnarray*}
\theta(s,t)&=&\theta(s+t-1,1)\frac{s!\, t!}{(s+t-1)!}
-\frac{s!\, t!}{(s+t+1)!}\left(\sum_{j=1}^{t-1} \frac{1}{s+j}\right)\\
&=&-\frac{(2(s+t)+1)}{(s+t)(s+t+1)}\frac{s!\, t!}{(s+t+1)!}
-\frac{s!\, t!}{(s+t+1)!}\left(\sum_{j=1}^{t-1} \frac{1}{s+j}\right)\\
&=&-\frac{s!\, t!}{(s+t+1)!}\left(\frac{(2(s+t)+1)}{(s+t)(s+t+1)}
+\sum_{j=1}^{t-1} \frac{1}{s+j}\right) \\
&=&-\frac{s!\, t!}{(s+t+1)!} \left(\sum_{j=1}^{t+1} \frac{1}{s+j}\right) 
=-\frac{s!\, t!}{(s+t+1)!} \left( \xi_{s+t+1}-\xi_s \right).
\end{eqnarray*}
Hence, for $(W,1-W)\sim\,$Dirichlet$\,(l,m)$ we have  
\begin{eqnarray*}
\EE(W \, \log W)
&=& \frac{(l+m-1)!}{l-1)!\, (m-1)!} \theta(l,m-1) \\ 
&=& - \frac{(l+m-1)!}{(l-1)!\, (m-1)!} 
\cdot \frac{l!\, (m-1)!}{(l+m)!} \left( \xi_{l+m}-\xi_l \right) 
= -\frac{l}{l+m}\left(\xi_{l+m}-\xi_l\right).
\end{eqnarray*}
Setting $W=W_{B,A}$, $l=|B|$ and $m=|A|-|B|$, we obtain
$$
\EE(W_{B,A}\, \log W_{B,A}) = 
- \frac{|B|}{|A|}\left(\xi_{|A|}-\xi_{|B|}\right)
$$
and substituting this into (\ref{eqII4}) yields the result.
\end{proof}

\medskip

\noindent\textbf{Note}. The calculations in the above proof also enable 
$\EE\bigl(h_P(\beta \,|\, \alpha)\bigr)$ to be computed exactly when 
$X(A)\sim \,$Dirichlet $\,(\underbrace{c,..,c}\limits_{|A| \text{times}})$
where $c$ is a positive fixed integer which is the same for all 
$A\in \alpha$.
Here, we have fixed $c=1$ in the statement of the theorem because this 
causes all distributions~$p$ that extend $q$ to occur with the same 
probability.

\medskip

\begin{corollary}
\label{cor1}
We have
\begin{eqnarray*}
\EE\left(h(\beta, P)\right)
&=&
h(\alpha, q)+\sum_{A\in\alpha}q_A\xi_{|A|}-
\sum_{A\in\alpha}\frac{q_A}{|A|} 
\left(\sum_{B\in\beta: B\subseteq A}|B| \xi_{|B|}\right)\,;\\ 
\EE\left(\Delta^q(\beta,P)\right)
&=& \sum_{A\in \alpha} q_A\log N^\beta(A)-
\sum_{A\in\alpha}q_A\xi_{|A|}
+\sum_{A\in\alpha}\frac{q_A}{|A|}
\left(\sum_{B\in\beta: B\subseteq A} |B|\xi_{|B|}\right)\,.
\end{eqnarray*}
\end{corollary}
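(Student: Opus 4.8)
The plan is to derive both identities directly from the preceding theorem, using nothing more than linearity of expectation together with the two decompositions already recorded in Section~\ref{sec2}: the relation $h(\beta,r)=h(\alpha,q)+h_r(\beta\,|\,\alpha)$ coming from~(\ref{eqII3}), and the relation $\Delta^q(\beta,r)=\sum_{A\in\alpha}q_A\log N^\beta(A)-h_r(\beta\,|\,\alpha)$ coming from~(\ref{eqII1}). Both hold pointwise for every $r\in\Pa(I,\sigma(\beta)\,|\,q)$, hence in particular for the (random) restriction of $P$ to $\sigma(\beta)$.

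First I would note the one point that actually needs to be checked: since $P$ takes values in $\Pa(I,\sigma(\D)\,|\,q)$, every realization of $P$ restricts to $q$ on $\sigma(\alpha)$, so the quantities $h(\alpha,q)$ and $\sum_{A\in\alpha}q_A\log N^\beta(A)$ are genuine constants depending only on the fixed data $\alpha,\beta,q$ and not on $P$. Everything after this is bookkeeping.

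For the first formula, apply $h(\beta,P)=h(\alpha,q)+h_P(\beta\,|\,\alpha)$ and take expectations; by linearity and the constancy of $h(\alpha,q)$ one gets $\EE(h(\beta,P))=h(\alpha,q)+\EE(h_P(\beta\,|\,\alpha))$, and substituting the expression for $\EE(h_P(\beta\,|\,\alpha))$ from the theorem produces the stated identity. For the second formula, start instead from~(\ref{eqII1}): pointwise $\Delta^q(\beta,P)=\sum_{A\in\alpha}q_A\log N^\beta(A)-h_P(\beta\,|\,\alpha)$, so $\EE(\Delta^q(\beta,P))=\sum_{A\in\alpha}q_A\log N^\beta(A)-\EE(h_P(\beta\,|\,\alpha))$, and inserting the theorem's formula while distributing the minus sign gives the claim.

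There is no real obstacle here; the only thing to be careful about is presentational — ensuring that the signs of the two harmonic-number sums are flipped correctly when $\EE(h_P(\beta\,|\,\alpha))$ is subtracted (in the $\Delta$ formula) rather than added (in the $h(\beta,P)$ formula).
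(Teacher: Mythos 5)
Your proposal is correct and is exactly the argument the paper intends: the corollary is stated without proof precisely because it follows from Theorem~1 by taking expectations in the pointwise identities $h(\beta,P)=h(\alpha,q)+h_P(\beta\,|\,\alpha)$ from~(\ref{eqII3}) and $\Delta^q(\beta,P)=\sum_{A\in\alpha}q_A\log N^\beta(A)-h_P(\beta\,|\,\alpha)$ from~(\ref{eqII1}), with the two non-random terms constant because every realization of $P$ extends~$q$. Nothing is missing.
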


\section{A large deviation bound on maximality} 
\label{sec4}

Fix two partitions $\alpha$ and $\beta$ satisfying $\beta\succ \alpha$, together
with a probability measure $q\in \Pa(I,\sigma(\alpha))$.
For any observed~$P$, we can compute $\Delta^q(\beta, P)$. The corollary in the
preceding section gives the value of $\EE\left(\Delta^q(\beta,P)\right)$.
We would like to bound the expression
$$
\PP\left((\Delta^q(\beta,P)-\EE(\Delta^q(\beta,P)))\le -\lambda\right),
$$
to help us better understand the lower tail behavior of the distribution of
$\Delta^q(\beta,P)$ at the extreme near~$0$. This would provide an idea of how
typical a realization~$P$ is among all possible extensions of~$q$.

\medskip

An upper bound on this probability will be obtained by using the 
Azuma-Hoeffding large deviation inequality \citep{hoeffding1963,azuma1967}.
Since this inequality involves a filtration of $\sigma-$fields, it will be 
convenient to define a 
filtration where computations can be made and where the bound may be
sufficiently tight. Toward this end, we shall consider dyadic refinements, refinements that split a single atom. 

\medskip

More precisely, 
assume~$\gamma$ and~$\delta$ are two 
partitions that satisfy 
$\beta\succeq \delta\succeq \gamma\succeq \alpha$. We say that 
$\delta$ is a dyadic refinement of $\gamma$ if there is a unique atom 
$C^*\in \gamma$ that is split into two atoms $D^*,C^*\setminus D^*\in 
\delta$ while all
remaining atoms in $\gamma$ are also atoms of $\delta$, so
$\delta=\gamma\setminus \{C^*\}\cup \{D^*,C^*\setminus D^*\}$. 
Clearly, $|\delta|=|\gamma|+1$. From Corollary \ref{cor1} and by using
$A^*$ to denote the atom of $\alpha$ containing $C^*$, we find 
\begin{eqnarray}
\label{eqIV1}
\EE\left(\Delta^q(\delta,P)-\Delta^q(\gamma,P)\right)
&=&q_{A^*}\log\left(1\!+\!\frac{1}{N^\gamma(A^*)}\right) \\
\nonumber
&{}& +\frac{q_{A^*}}{|A^*|}
\left(|D^*|\xi_{|D^*|}+(|C^*|\!-\!|D^*|)\xi_{|C^*|-|D^*|}
-|C^*|\xi_{|C^*|}\right).
\end{eqnarray}

Now, a dyadic sequence of partitions from
$\alpha$ to $\beta$ is an increasing
sequence of partitions $\beta_{0,K}:=(\beta_k: k=0,..,K)$
from $\beta_0=\alpha$ to $\beta_K=\beta$ such that
$\beta_{k+1}$ is a dyadic refinement of $\beta_k$, that is,
$\beta_{k+1}=\beta_k \setminus 
\{B^k\}\cup \{\bar B^k,B^k\setminus \bar B^k\}$
for some $\bar B^k\subset B^k$ and all $k=0,..,K-1$. We 
let $A^k=A^{B^k,\alpha}$ denote the unique 
atom in $\alpha$ that contains the unique atom $B^k\in
\beta_k$ that is split in the refinement of 
$\beta_k$ to $\beta_{k+1}$. Note
that $|\beta_k|- |\alpha|=k$ for $k=0,..,K$. 
In particular, $|\beta|-|\alpha|=K$.
Also note that $B^0=A^0$ because $\beta_0=\alpha$.

\medskip

We shall associate the following sequence of real numbers 
$(\zeta_k: k=0,..,K-1)$ with a dyadic sequence of partitions 
$\beta_{0,K}$ from~$\alpha$
to~$\beta$:
\begin{equation}
\label{eqVII1}
\zeta_k=\max\{G_k, H_k+L_k\}\,, 
\end{equation}
where
\begin{eqnarray}
\nonumber
G_k &=& q_{A^k} 
\log \max\{ N^{\beta}(\bar B^k), N^{\beta}(B^k\setminus \bar B^k)\}
+\log2
+\min\{-q_{A^k}\log q_{A^k},e^{-1}\}\,;\\
\nonumber
H_k &=& q_{A^k}\,
\log N^\beta(B^k) +\min\{-q_{A^k}\,\log q_{A^k}, e^{-1}\}\,; \\
\label{eqVIII1}
L_k&=& \frac{q_{A^k}}{|A^k|}\,
\left(|\bar B^k|(\xi_{|B^k|}-\xi_{|\bar B^k|})+
|B^k\setminus \bar B^k|(\xi_{|B^k|}-\xi_{|B^k\setminus \bar B^k|})\right)\,. 
\end{eqnarray}

\begin{theorem}
\label{thm2}
Let $\beta\succ \alpha$. For all $\lambda>0$ we have
\begin{equation}
\label{azuma-hoeffding}
\PP(\Delta^q(\beta,P)\le \EE(\Delta^q(\beta,P))-\lambda)\le
\exp\left(-\frac{\lambda^2}{2\, \sum_{k=0}^{K-1}
\zeta_k^2}\right),
\end{equation}
where $\beta_{0,K}$ is a dyadic sequence of partitions from 
$\alpha$ to $\beta$ and $(\zeta_k: k=0,..,K-1)$ is given by 
(\ref{eqVII1}) and (\ref{eqVIII1}).

\medskip

Moreover, when the following condition holds,
\begin{equation}
\label{eqsufcond}
q_A\le \frac{1}{2} \text{ for all } A\in \alpha,
\end{equation}
we have $\zeta_k=G_k$, $k=0,..,K-1$.

\medskip

Finally, if 
$q_A\le e^{-1}$ for all $A\in \alpha$,
$\zeta_k$ becomes
\begin{equation}
\label{neq4}
\zeta_k = q_{A^k} \log \max\{N^\beta(\bar B^k), 
N^\beta(B^k\setminus\bar B^k)\} + \log2 -q_{A^k}\log q_{A^k}.
\end{equation}
\end{theorem}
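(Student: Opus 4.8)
The plan is to obtain~(\ref{azuma-hoeffding}) from the Azuma--Hoeffding inequality applied to a Doob martingale built along the dyadic sequence $\beta_{0,K}$. First, by~(\ref{eqII1}) we have $\Delta^q(\beta,P)=\sum_{A\in\alpha}q_A\log N^\beta(A)-h_P(\beta\,|\,\alpha)$ with a deterministic first term, so the event $\{\Delta^q(\beta,P)\le\EE(\Delta^q(\beta,P))-\lambda\}$ is exactly $\{h_P(\beta\,|\,\alpha)-\EE(h_P(\beta\,|\,\alpha))\ge\lambda\}$; it therefore suffices to bound the upper tail of the conditional entropy. I would work with the filtration ${\cal F}_k=\sigma\bigl(P(B):B\in\beta_k\bigr)$, $k=0,\dots,K$: because $\beta_{k+1}$ is obtained from $\beta_k$ by splitting the single atom $B^k$ into $\bar B^k$ and $B^k\setminus\bar B^k$, passing from ${\cal F}_k$ to ${\cal F}_{k+1}$ reveals exactly the ratio $u_k=P(\bar B^k)/P(B^k)$. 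Since ${\cal F}_0$ is trivial and ${\cal F}_K=\sigma(P(B):B\in\beta)$, the Doob martingale $M_k=\EE(\Delta^q(\beta,P)\mid{\cal F}_k)$ runs from $M_0=\EE(\Delta^q(\beta,P))$ to $M_K=\Delta^q(\beta,P)$, and applying Azuma--Hoeffding to $(M_0-M_k)_k$ gives~(\ref{azuma-hoeffding}) once $|M_{k+1}-M_k|\le\zeta_k$ almost surely.

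The core of the proof is this increment bound. The key structural fact is that, under the Gamma construction used in the proof of the theorem of Section~\ref{sec3}, conditionally on ${\cal F}_k$ the normalized vectors $\bigl(P_i/P(B):i\in B\bigr)$, $B\in\beta_k$, are independent $\text{Dirichlet}(1,\dots,1)$ laws independent of ${\cal F}_k$; in particular $u_0,\dots,u_{K-1}$ are independent. Using the decomposition $h_P(\beta\,|\,\alpha)=h_P(\beta_k\,|\,\alpha)+h_P(\beta\,|\,\beta_k)$ together with $\EE\bigl(h_P(\beta\,|\,\beta_k)\mid{\cal F}_k\bigr)=\sum_{B\in\beta_k}P(B)\,\psi_B$, where $\psi_B$ is the mean entropy of the $\beta$-partition of $B$ under a uniform Dirichlet on $B$ (it depends only on $|B|$ and the sizes of the $\beta$-atoms contained in $B$, and is read off from the theorem of Section~\ref{sec3}), I would obtain
\[
M_k=\sum_{A\in\alpha}q_A\log N^\beta(A)-h_P(\beta_k\,|\,\alpha)-\sum_{B\in\beta_k}P(B)\,\psi_B.
\]
Since $\beta_{k+1}$ differs from $\beta_k$ only at $B^k$, this yields
\[
M_{k+1}-M_k=P(B^k)\psi_{B^k}-P(B^k)\,\varphi(u_k)-P(\bar B^k)\psi_{\bar B^k}-P(B^k\setminus\bar B^k)\psi_{B^k\setminus\bar B^k},
\]
with $\varphi(t)=-t\log t-(1-t)\log(1-t)$. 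Bounding this quantity above and below using only $0\le\psi_B\le\log N^\beta(B)$, $0\le\varphi\le\log2$, $-t\log t\le e^{-1}$, $P(B^k)\le q_{A^k}$, $P(\bar B^k)+P(B^k\setminus\bar B^k)=P(B^k)$, and the identity $L_k=\EE\bigl(h_P(\beta_{k+1}\,|\,\alpha)-h_P(\beta_k\,|\,\alpha)\bigr)$ (from Corollary~\ref{cor1}, equivalently~(\ref{eqIV1})), one controls the positive part of $M_{k+1}-M_k$ by $H_k+L_k$ and the negative part by $G_k$ --- the $\min\{-q_{A^k}\log q_{A^k},e^{-1}\}$ entering because the $-q\log q$ slack may be estimated either directly or by the universal bound $e^{-1}$. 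Hence $|M_{k+1}-M_k|\le\max\{G_k,H_k+L_k\}=\zeta_k$, and Azuma--Hoeffding gives~(\ref{azuma-hoeffding}) for any dyadic sequence $\beta_{0,K}$.

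For the refinement under~(\ref{eqsufcond}), I would note that $L_k=\frac{q_{A^k}|B^k|}{|A^k|}\,\EE(\varphi(u_k))\le q_{A^k}\log2$ and $\max\{N^\beta(\bar B^k),N^\beta(B^k\setminus\bar B^k)\}\ge\tfrac12 N^\beta(B^k)$, so that
\[
G_k-(H_k+L_k)=\log2+q_{A^k}\log\frac{\max\{N^\beta(\bar B^k),N^\beta(B^k\setminus\bar B^k)\}}{N^\beta(B^k)}-L_k\ge(1-2q_{A^k})\log2\ge0
\]
whenever $q_A\le\tfrac12$ for all $A\in\alpha$, giving $\zeta_k=G_k$. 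Finally, $q_A\le e^{-1}$ implies both $q_A\le\tfrac12$ and $-q_{A^k}\log q_{A^k}\le e^{-1}$, so $\min\{-q_{A^k}\log q_{A^k},e^{-1}\}=-q_{A^k}\log q_{A^k}$ and $\zeta_k=G_k$ reduces to~(\ref{neq4}).

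The main obstacle will be the increment bound: one must select the chain of elementary estimates for $M_{k+1}-M_k$ so that its positive and negative parts come out as precisely $H_k+L_k$ and $G_k$ (with the stated $\min$'s) rather than as some other, equally valid, quantities. Identifying $M_0$ and $M_K$, checking that $(M_k)$ is a martingale, and invoking Azuma--Hoeffding are routine once the conditional Dirichlet structure is in hand.
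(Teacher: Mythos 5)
Your proposal is correct, and it rests on the same skeleton as the paper's proof: the same filtration $\FF_k=\sigma(P(B):B\in\beta_k)$ along a dyadic sequence, the same Doob martingale interpolating between $\EE(\Delta^q(\beta,P))$ and $\Delta^q(\beta,P)$, and the Azuma--Hoeffding inequality, so that everything reduces to showing $|\MM_{k+1}-\MM_k|\le\zeta_k$. Where you genuinely differ is in how that increment bound is obtained. The paper never computes $\MM_k$ in closed form: it expresses the increment through conditional expectations of $P(B)\log P(B)$ over $B\subseteq B^k$ (its decomposition (\ref{eqVII3})) and bounds the two blocks via Jensen's inequality, which is precisely where the terms $-P(B^k)\log P(B^k)$, and hence the $\min\{-q_{A^k}\log q_{A^k},e^{-1}\}$ corrections in $G_k$ and $H_k$, come from. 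You instead use the conditional Dirichlet (Gamma) structure to compute $\MM_k$ exactly, via the chain rule $h_P(\beta\given\alpha)=h_P(\beta_k\given\alpha)+h_P(\beta\given\beta_k)$ and the deterministic mean entropies $\psi_B$, so the increment is the explicit quantity $P(B^k)\psi_{B^k}-P(B^k)\varphi(u_k)-P(\bar B^k)\psi_{\bar B^k}-P(B^k\setminus\bar B^k)\psi_{B^k\setminus\bar B^k}$; the elementary estimates $\psi_B\le\log N^\beta(B)$, $\varphi\le\log2$ and $P(\bar B^k)+P(B^k\setminus\bar B^k)=P(B^k)\le q_{A^k}$ then bound it above by $H_k$ and below by $-G_k$ --- in fact slightly more tightly than $\zeta_k=\max\{G_k,H_k+L_k\}$, the $\min$ terms and $L_k$ not even being needed, which is harmless since any a.s.\ bound by $\zeta_k$ suffices for (\ref{azuma-hoeffding}); your exact formula also makes the role of $L_k$ (the compensator coming from (\ref{eqIV1})) more transparent than the paper's decomposition does. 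A second genuine difference is that your identity $L_k=\frac{q_{A^k}|B^k|}{|A^k|}\EE\bigl(\varphi(u_k)\bigr)\le q_{A^k}\log2$ replaces the paper's combinatorial Lemma~\ref{sintetico} (the harmonic-sum bound $\phi(s,t)\le s\log2$) by a one-line probabilistic argument; combined with $\max\{N^\beta(\bar B^k),N^\beta(B^k\setminus\bar B^k)\}\ge N^\beta(B^k)/2$ it yields the criterion $q_A\le\tfrac12$ exactly as in the paper, and your treatment of the case $q_A\le e^{-1}$ is identical to the paper's. In short: same strategy and same final constants, but a cleaner, fully explicit computation of the martingale increments and a shorter proof of the $L_k$ estimate, at the modest price of having to justify the conditional-independence of the normalized within-atom Dirichlet vectors, which the Gamma construction already provides.
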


\begin{proof}
First, associate the following set of 
random variables with a partition $\beta$:
\begin{equation}
\label{eqIV2}
(\X_B: B\in \beta) \, \text{ where } \, 
\X_B=q_{A^{B,\alpha}}^{-1}\; P(B).
\end{equation}
Let $\beta_{0,K}:=(\beta_k: k=0,..,K)$ be
a dyadic sequence of partitions from $\beta_0=\alpha$ to $\beta_K=\beta$.
Define a sequence of increasing 
$\sigma-$fields $\FF=(\FF_k: k=0,..,K)$ as follows:
\begin{equation}
\label{eqIV3}
\FF_k=\sigma(\X_B: B\in \beta_{k})  \hbox{ for } k=0,..,K.
\end{equation}
Note that $\FF_0$ is trivial because $\X_A=1$ for all $A\in \alpha$.

\medskip

Then, we can define a sequence of random variables $(\MM_k:  k=0,..,K)$ 
which is adapted to~$\FF$:
\begin{equation}
\label{eqV1}
\MM_K=\Delta^q(\beta,P)-\EE(\Delta^q(\beta,P))\,, \hbox{ and } \,
\forall \, k=0,...,K-1:\; \MM_k=\EE(\MM_K \, | \, \FF_k).
\end{equation}
It can be seen that $\MM_K$ is $\FF_K-$measurable because $\beta_K=\beta$, and
$(\MM_k: k=0,..,K)$ is a $\FF-$martingale. Observe that $\MM_0=0$ 
since $\FF_0$ is trivial.

\medskip

From (\ref{eqVI3}) and the second equality in Corollary \ref{cor1}, we have:
\begin{eqnarray}
\nonumber
\MM_K&=&
\sum_{B\in \beta}P(B)\log P(B)
-\sum_{A\in \alpha} q_A\log q_A+
\sum_{A\in\alpha}q_A \xi_{|A|} \\
\label{eqVII2}
&{}& \; -\sum_{A\in\alpha}\frac{q_A}{|A|}
\left(\sum_{B\in\beta: B\subseteq A} |B|\xi_{|B|}\right).
\end{eqnarray}
Now, for $k=0,..,K-1$ and all $B\in \beta$ 
satisfying $B\cap B^k=\emptyset$, we have
$$
\EE(P(B)\log P(B) \,|\, \FF_k)=\EE(P(B)\log P(B) \,|\, \FF_{k+1}).
$$
Hence
\begin{eqnarray}
\nonumber
\MM_{k+1}-\MM_k &=& \sum_{B\in \beta: B\subseteq B^k}
\EE\left(P(B)\log P(B) \,|\, \FF_{k+1}\right)
+\sum_{B\in \beta: B\subseteq B^k}
\EE\left(-P(B)\log P(B) \,|\, \FF_k\right) \\
\label{eqVII3}
&{}& \; +\frac{q_{A^k}}{|A^k|}
\left(|\bar B^k|(\xi_{|B^k|}-\xi_{|\bar B^k|})+
|B^k\setminus \bar B^k|(\xi_{|B^k|}-\xi_{|B^k\setminus \bar B^k|})\right).
\end{eqnarray}
Since the first term on the right-hand side is non-positive and 
the second and third terms are non-negative we get
$$
|\MM_{k+1}-\MM_k|\le \max\{-G'_k, H'_k+L_k\},
$$
where
\begin{eqnarray*}
G_k' &=& \sum_{B\in \beta: B\subseteq B^k}
\EE\left(P(B)\log P(B) \,|\, \FF_{k+1}\right)  \text{ and} \\
H_k' &=& -\sum_{B\in \beta: B\subseteq B^k}
\EE\left(P(B)\log P(B) \,|\, \FF_k\right)\,.
\end{eqnarray*}

We start by computing a bound on $-G'_k$. First, we 
shall obtain an upper bound for \\ 
$\EE(-P(B)\log P(B) \,|\, \FF_{k+1})$ 
by using Jensen's inequality:
$$
\EE(-P(B)\log P(B) \,|\, \FF_{k+1})\le -
\EE(P(B)\,|\, \FF_{k+1}) \log \EE(P(B)\,|\, \FF_{k+1}).
$$
Then
$$
\EE(P(B)\,|\, \FF_{k+1})=\frac{\EE(P(B))}{\EE(P(B^*))}P(B^*) 
\hbox{ where }  B^*\in \beta_{k+1},\; B\subseteq B^*.
$$
Furthermore, it follows from the 
definition of $P$ in~(\ref{eqI2}) that 
$q_A^{-1} \, |A| \, \EE(P_i)=1$ for all $i\in A$, so 
$$
\forall \, B\in \beta, B\subseteq A^k:\quad 
\EE(P(B))=q_{A^k}\frac{|B|}{|A|} \; \hbox{ and } \;
\EE(P(B) \,|\, \FF_{k+1})=\frac{|B|}{|B^*|}P(B^*),
$$
where $B\subseteq B^*\in\beta_{k+1}$. Hence,
$$
\EE(-P(B)\log P(B) \,|\, \FF_{k+1})\le - P(B^*) \frac{|B|}{|B^*|} 
\log \left(\frac{|B|}{|B^*|} \right) -\frac{|B|}{|B^*|}P(B^*)\log P(B^*)
$$
and therefore
\begin{eqnarray*}
&{}& \sum_{B\in \beta: B\subseteq B^k}\EE(-P(B)\log P(B) \,|\, \FF_{k+1})\\
&{}& \; \le -\sum_{B^*\in \beta_{k+1}: B^*\subseteq B^k} P(B^*) 
\sum_{B\in \beta: B\subseteq B^*} 
\frac{|B|}{|B^*|}\log \left(\frac{|B|}{|B^*|} \right) 
-\sum_{B^*\in \beta_{k+1}: B^*\subseteq B^k}P(B^*)\log P(B^*) \\
&{}& \; \le P(B^k) 
\log \max\{ N^{\beta}(\bar B^k), N^{\beta}(B^k\setminus \bar B^k)\} 
-\sum_{B^*\in \beta_{k+1}: B^*\subseteq B^k}\frac{P(B^*)}{P(B^k)} 
\log \left(\frac{P(B^*)}{P(B^k)}\right) \\ 
&{}& \quad -P(B^k)\log P(B^k) \\ 
&{}& \; \le q_{A^k}\, \sup\{\log N^{\beta}(B^*): B^*\in \beta_{k+1}\}
+\log N^{\beta_{k+1}}(B^k)-P(B^k)\log P(B^k).
\end{eqnarray*}

Now, the function $-x\log x$ is increasing on the interval $[0,e^{-1}]$ 
and so if $q_{A^k}\, \le e^{-1}$, it follows that 
$-P(B^k)\log P(B^k)\le -q_{A^k}\log q_{A^k}$. 
Otherwise $-P(B^k)\log P(B^k)\le e^{-1}$. 
Also, $N^{\beta_{k+1}}(B^k)=2$ as a consequence of 
the dyadic construction of the sequence of refinements $\beta_{0,K}$. 
Therefore, we obtain
\begin{eqnarray}
\nonumber
-G_k' &=& \sum_{B\in \beta: B\subseteq B^k}
\EE(-P(B)\log P(B) \,|\, \FF_{k+1}) 
\\
\nonumber
&\le & q_{A^k}\, \log \max\{ N^{\beta}(\bar B^k), 
N^{\beta}(B^k\setminus \bar B^k)\} +\log2
+\min\{-q_{A^k}\log q_{A^k}, e^{-1}\} \\
\label{eqVI4}
&=& G_k.
\end{eqnarray}
Analogously, we also have
\begin{eqnarray}
\nonumber
H_k' &\le& \sum_{B\in \beta: B\subseteq B^k}
\EE(-P(B)\log P(B) \,|\, \FF_k) \\
\nonumber
&\le & -\sum_{B^*\in \beta_k: B^*\subseteq B^k} P(B^*) 
\sum_{B\in \beta: B\subseteq B^*} 
\frac{|B|}{|B^*|}\log \left(\frac{|B|}{|B^*|} \right)
-\sum_{B^*\in \beta_k: B^*\subseteq B^k}P(B^*)\log P(B^*) \\
\nonumber
&\le & P(B^k)\log N^{\beta}(B^k)
-P(B^k)\log P(B^k) \\
\label{eqVI5}
&\le& q_{A^k}\, \log N^\beta(B^k) + 
\min\{-q_{A^k}\, \log q_{A^k}, e^{-1}\} = H_k.
\end{eqnarray}

By applying inequalities (\ref{eqVI4}) and (\ref{eqVI5}) to
(\ref{eqVII3}) and making use of  definitions (\ref{eqVII1}) and
(\ref{eqVIII1}), a straightforward argument gives
\begin{equation}
\label{eqVI2}
\forall k=0,...,K-1:\quad \sup\{|\MM_{k+1}-\MM_k|: k=0,..,K-1\}
\le \zeta_k.
\end{equation}

Then the Azuma-Hoeffding large deviation inequality
\citep[see for instance Lemma~11.2 in Section~11.1.4 of][]{waterman1995} 
gives the result. This inequality applies
to any martingale $(\MM_k:  k=0,..,K)$ with $\MM_0=0$
and $|\MM_{k+1}-\MM_k|\le \zeta_k$ a.s. for $k=0,..,K-1$:
$\PP(\MM_K\le -\lambda)\le \hbox{exp}\left(-\lambda^2 \big/ 
(2\sum_{k=0}^{K-1}\zeta_k^2)\right)$ for all $\lambda>0$. 

\medskip

Now, let us show that condition (\ref{eqsufcond}) implies 
$\zeta_k=G_k$, $k=0,..,K-1$. We will
require the following result in order to do this.

\begin{lemma}
\label{sintetico}
Let us consider the function for $t=1,..,s-1$ and $s=2,3,\ldots$:
\begin{equation}
\label{neq6}
\phi(s,t)= t(\xi_s-\xi_t) + (s-t)(\xi_s-\xi_{s-t})
= \sum_{u=t+1}^s\frac{t}{u} + \sum_{u=s-t+1}^s\frac{s-t}{u}.
\end{equation}
Then,
\begin{equation}
\label{neq7}
\phi(s,t) \leq s\log2 \hbox{ for all } t=1,..,s-1 \hbox{ and }s=2,3,\ldots.
\end{equation}
\end{lemma}

\begin{proof}  
First observe that $\phi(s,t)$ is symmetric in $t$ around $s/2$ in the
sense that $\phi(s,t)=\phi(s,s-t)$. If $s$ is even, then $\phi(s,t)$
takes its maximum value at $t=s/2$.
Defining $\bar\phi(s) = \max\{\phi(s,t): t=1,..,s-1\}$,
we have
$$
\bar\phi(s) = \phi(s,s/2) = s(\xi_s-\xi_{s/2})
= \sum_{u=s/2+1}^s\frac{1}{u} \leq s\int_{s/2}^s \; \frac{1}{x} \,dx
= s\bigl(\log s -\log(s/2)\bigr) = s\log2.
$$
On the other hand, if $s$ is odd, then $\phi(s,t)$
takes on its maximum value at
$t=(s\pm1)/2$. Fixing $l=(s-1)/2$ and $m=(s+1)/2=l+1$, we have
\begin{eqnarray*}
\bar\phi(s) &=& \phi(s,l) = \phi(s,m)
=l\sum_{u=l+1}^s\frac{1}{u} + m\sum_{u=m+1}^s\frac{1}{u}
= l\sum_{u=m+1}^s\frac{1}{u} +
m\sum_{u=m+1}^s\frac{1}{u} + \frac{l}{l+1} \\
&\le& s\sum_{u=m+1}^s\frac{1}{u} + \frac{s}{s+1}
= s\sum_{u=m+1}^{2m}\frac{1}{u} = \frac{s}{2m}\phi(2m,m),
\end{eqnarray*}
since $s+1=2m$. Above, it was proved that $\phi(2m,m)\le 2m\log2$, so
$\bar\phi(s)\le s\log2$ also if~$s$ is odd.
\end{proof}

\medskip

\noindent {\it Continuation of the Proof of Theorem \ref{thm2}}.
Note that $\zeta_k=G_k$ if
\begin{equation}
\label{GHL.ineq}
G_k\ge H_k+L_k.
\end{equation}

The function $L_k$, which  depends on $q_{A^k}$, $|B^k|$ and $|\bar B^k|$,
can be written in the form
$L_k=\frac{q_{A^k}}{|A^k|} \phi\bigl(|B^k|, |\bar B^k|\bigr)$.
From Lemma \ref{sintetico} we get the bound
$$
\max L_k = \frac{q_{A^k}}{|A^k|} \cdot
\bar\phi\bigl(|B^k|\bigr) \le \frac{|B^k|}{|A^k|}q_{A^k}\log2 \le
q_{A^k}\log2.
$$
Next, $G_k$ is a function of $q_{A^k}$, $B^k$ and $\bar B^k$.
However, only the first term of $G_k$ depends on $B^k$ and $\bar B^K$:
$q_{A^k}\log \max\{ N^\beta(\bar B^k), N^\beta(B^k\setminus \bar B^k)\}$.
This is clearly minimized when $N^\beta(B^k)$ is even and $N^\beta(\bar
B^k)=N^\beta(B^k)/2$ or $N^\beta(B^k)$ is odd and $N^\beta(\bar B^k)$ and
$N^\beta(B^k\setminus \bar B^k)$ differ by $1$. The second
term of $G_k$ is constant and the third only depends on $q_{A^k}$.  
Therefore, $G_k$ can  be minimized by selecting $\bar B^k$ so
that $N^\beta(\bar B^k) = \lceil N^\beta(B^k)/2\rceil$. This means
the minimum value of $G_k$ can be computed directly
from $q_{A^k}$ and $N^\beta(B^k)$, regardless of the
composition of $B^k$:
$$
\min G_k = q_{A^k} \log \lceil N^\beta(B^k)/2\rceil +\log2
+\min\{-q_{A^k}\log q_{A^k},e^{-1}\}.
$$
Then, a sufficient condition for (\ref{GHL.ineq}) to hold is
$\min G_k-H_k\ge \max L_k$, which is equivalent to
\begin{multline}
q_{A^k} \log\bigl( N^\beta(B^k)/2\bigr) + \log2
+ \min\{-q_{A^k}\log q_{A^k},e^{-1}\} -q_{A^k}\log N^\beta(B^k) \\
- \min\{-q_{A^k}\log q_{A^k},e^{-1}\} \ge q_{A^k}\log2.
\end{multline}
Rearranging and simplifying  reduces this condition to $q_{A^k}\le 1/2$.
Therefore, $\zeta_k=G_k$ for all $k=0,..,K-1$ whenever
$q_A\le 1/2$ for all $A\in\alpha$, which is exactly 
condition (\ref{eqsufcond}). 

\medskip

To show the last part of the Theorem, assume 
$q_A\le e^{-1}$ for all $A\in \alpha$. Relation
(\ref{neq4}) follows as a trivial consequence because 
$$
\zeta_k = G_k=q_{A^k} \log \max\{N^\beta(\bar B^k),
N^\beta(B^k\setminus\bar B^k)\} + \log2 -q_{A^k}\log q_{A^k}.
$$
\end{proof}

\begin{remark}
\label{satisfied}
In Section~\ref{sec5}, we examine a collection of 2535 bacterial
chromosomes downloaded from the NCBI ftp server.
Among these, we found that the largest value of
$\max\{q_A: A\in\alpha\}$ for any chromosome was $0.179855$, which
is well below $e^{-1}$.
Thus, condition~(\ref{eqsufcond}) is satisfied in all
the real-world bacterial chromosomes considered and likely holds in the
chromosomes of many other organisms and hence all the conclusions of
Theorem~\ref{thm2} are applicable.  
\end{remark}

\subsection*{\bf Computing the optimal bound}

When using (\ref{azuma-hoeffding}) to bound the tail probabilities of
$\Delta^q(\beta,P)$, it is desirable for $\sum_{k=0}^{K-1}\zeta_k^2$ to be as
small as possible.  

\medskip

From here on, we shall assume condition (\ref{eqsufcond}) holds, that is 
$q_A\le \frac{1}{2}$ for all $A\in \alpha$. Then
$\zeta_k=G_k$, $k=0,..,K-1$. In this case there is a straight 
forward strategy for selecting a $\beta_{0,K}$ which directly provides the 
minimum value of the bound
among all dyadically constructed sequences of partitions.  
This strategy is presented in the following theorem.
Clearly, the  $\zeta_k$'s depend on
the choice of the sequence of partitions $\beta_{0,K}$. As can be seen from
the theorem and its proof, there are numerous such sequences of partitions that
achieve the minimum bound and it is only necessary to select one of these.

\begin{theorem}
\label{thm3}
Fix two partitions~$\alpha$ and~$\beta$ such that $\beta\succ\alpha$. 
The following construction of $\beta_{0,K}$ minimizes 
$\sum_{k=0}^{K-1}\zeta_k^2$, and hence the bound on the right-hand 
side of (\ref{azuma-hoeffding}) among all
possible dyadically constructed sequences of partitions 
from~$\alpha$ to~$\beta$:

\begin{description}
\item[Step $0$.] 
Select any atom $A\in\alpha$ satisfying $N^\beta(A)>1$.  This
will be $B^0$.  Then choose $\bar B^0$ to be any subset of $B^0$ that is as
close to half the size of $B^0$ as possible, that is, $N^\beta(\bar B^0) =
\lceil N^\beta(B^0)/2\rceil$.
Use $B^0$ and $\bar B^0$ to construct $\beta_1$.
\item[Step $k$.]
For $k=1,..,K-1$,
choose an atom $B^k$in $\beta_k$ and then choose a subset~$\bar B^k$
of~$B^k$ such that $N^\beta(\bar B^k) = \lceil N^\beta(B^k)/2\rceil$. 
Then use $B^k$ and $\bar B^k$ to obtain
$\beta_{k+1}=\beta_k \setminus 
\{B^k\}\cup \{\bar B^k,B^k\setminus \bar B^k\}$.
\end{description}
\end{theorem}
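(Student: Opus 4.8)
The plan is to reduce everything, via condition~(\ref{eqsufcond}) and Theorem~\ref{thm2}, to a per-atom combinatorial optimization. Since $\zeta_k=G_k$ under~(\ref{eqsufcond}) and $G_k$ depends only on $q_{A^k}$ together with the unordered pair $\{N^\beta(\bar B^k),\,N^\beta(B^k\setminus\bar B^k)\}$ (whose sum is $N^\beta(B^k)$), a dyadic sequence~$\beta_{0,K}$ from~$\alpha$ to~$\beta$ induces, for each $A\in\alpha$, a rooted binary tree $T_A$ whose root is~$A$, whose leaves are the $N^\beta(A)$ atoms of~$\beta$ contained in~$A$, and each of whose internal nodes records one of the splits performed inside~$A$; then
\begin{equation*}
\sum_{k=0}^{K-1}\zeta_k^2=\sum_{A\in\alpha}\mathrm{Cost}_A(T_A),\qquad
\mathrm{Cost}_A(T)=\sum_{v}\bigl(q_A\log\max\{\ell_v,\,s_v-\ell_v\}+D_A\bigr)^2,
\end{equation*}
the inner sum ranging over the internal nodes~$v$ of~$T$, where $s_v$ is the number of leaves below~$v$, $(\ell_v,s_v-\ell_v)$ are the leaf counts of the two children of~$v$, and $D_A:=\log2+\min\{-q_A\log q_A,\,e^{-1}\}\ge\log2>0$ is constant along the splits inside a given atom. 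Because the contribution of one split depends only on that split, splits belonging to different atoms may be interleaved in any order without affecting the sum; hence it suffices to show, for each fixed~$A$, that $\mathrm{Cost}_A$ is minimized by the recursively balanced tree, namely the one in which every node with $s$ leaves below it is split into subtrees with $\lceil s/2\rceil$ and $\lfloor s/2\rfloor$ leaves. This is precisely the tree produced by Steps~0--$K$, and it also explains why the choice of $B^k$ and of the half-sized $\bar B^k$ at each step is immaterial and why many sequences attain the optimum.

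To prove optimality of the recursively balanced tree for a fixed~$A$, put $q=q_A$, $D=D_A$, and let $C(n)$ be the minimum of $\mathrm{Cost}_A(T)$ over binary trees~$T$ with $n$ leaves, so $C(1)=0$ and, conditioning on the root split,
\begin{equation*}
C(n)=\min_{1\le\ell\le n-1}\Bigl[\bigl(q\log\max\{\ell,\,n-\ell\}+D\bigr)^2+C(\ell)+C(n-\ell)\Bigr].
\end{equation*}
I would prove by strong induction on~$n$ that the minimum is attained at $\ell=\lceil n/2\rceil$, which is the desired statement. In the inductive step, by the symmetry $\ell\leftrightarrow n-\ell$ one may assume $\ell\le n/2$, so $\max\{\ell,n-\ell\}=n-\ell\ge\lceil n/2\rceil$ and hence $\bigl(q\log(n-\ell)+D\bigr)^2\ge\bigl(q\log\lceil n/2\rceil+D\bigr)^2$; it then remains to establish
\begin{equation*}
C(\ell)+C(n-\ell)\ \ge\ C(\lceil n/2\rceil)+C(\lfloor n/2\rfloor)\qquad\text{for }1\le\ell\le\lfloor n/2\rfloor,
\end{equation*}
i.e.\ that among all pairs of positive integers with sum~$n$ the most balanced pair minimizes $C(\cdot)+C(\cdot)$; the induction hypothesis furnishes the values of~$C$ on arguments below~$n$ via the balanced recursion.

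The hard part, I expect, is this last inequality. It would follow at once from a discrete convexity $C(n+1)+C(n-1)\ge 2C(n)$ (or from whatever slightly weaker control of the increments of~$C$ actually suffices, once the nonnegative slack $\bigl(q\log(n-\ell)+D\bigr)^2-\bigl(q\log\lceil n/2\rceil+D\bigr)^2$ is taken into account). My plan for it would be a further induction, writing $C(n-1),C(n),C(n+1)$ through their (by then known) balanced recursions and comparing the consecutive increments $C(n)-C(n-1)$ and $C(n+1)-C(n)$ term by term, using only the monotonicity and convexity of $x\mapsto(q\log x+D)^2$ and harmonic-sum estimates of the type already isolated in Lemma~\ref{sintetico}. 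The subtlety is that the balanced recursion branches differently according to the parities of $n-1,n,n+1$, so the comparison splits into cases indexed by $n\bmod4$, and within each case one must track how a unit change in the number of leaves propagates down several levels of the recursion before the subtree sizes realign; keeping simultaneous control of these propagated discrepancies for both increments is where the real effort lies. Once such an estimate is secured, the reduction of the first paragraph and the induction of the second combine to give the theorem, and the description of all optimal sequences is immediate from the atom-by-atom decomposition.
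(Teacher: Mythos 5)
Your first paragraph is essentially the paper's own reduction (a forest of per-atom binary trees, additive cost, interleaving of splits from different atoms irrelevant, each split contributing $\bigl(q_A\log\max\{\ell,s-\ell\}+D_A\bigr)^2$ with $D_A=\log2+\min\{-q_A\log q_A,e^{-1}\}\ge\log2$), and that part is sound. The genuine gap is exactly the step you defer to a future induction: the claim that in your recursion for $C(n)$ the minimum is attained at $\ell=\lceil n/2\rceil$, equivalently that $C(\ell)+C(n-\ell)\ge C(\lceil n/2\rceil)+C(\lfloor n/2\rfloor)$ up to the root slack. You do not prove this, and no parity-by-parity convexity argument can, because it is false already at $n=6$, i.e.\ precisely the case needed for the three amino acids with six codons. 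Indeed $C(2)+C(4)=(q\log2+D)^2+3D^2$ while $2C(3)=2(q\log2+D)^2+2D^2$, so the unbalanced pair is \emph{smaller} by $(q\log2+D)^2-D^2$, and the root slack does not compensate: the halved tree for a six-leaf atom has internal nodes of sizes $6,3,3,2,2$ and cost $(q\log3+D)^2+2(q\log2+D)^2+2D^2$, whereas splitting $6$ into $4+2$ and then halving the $4$ costs $(q\log4+D)^2+(q\log2+D)^2+3D^2$; the difference is
\begin{equation*}
(q\log3+D)^2+(q\log2+D)^2-(q\log4+D)^2-D^2
=q^2\bigl(\log^23+\log^22-\log^24\bigr)+2qD\log\tfrac32
\approx q\bigl(0.811\,D-0.234\,q\bigr),
\end{equation*}
which is strictly positive for every $q\in(0,\tfrac12]$ because $D\ge\log2$. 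So the halving strategy is strictly suboptimal for six-leaf atoms, your hoped-for inequality fails at $(n,\ell)=(6,2)$, and your induction cannot pass $n=6$ (for $n\le5$ halving is in fact optimal, which is why small cases will not warn you).

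You should also know that this is not a defect peculiar to your route: the paper's proof of Theorem~\ref{thm3} buries the same difficulty in the sentence ``a straightforward computation shows that any suboptimal split of $A$ \ldots results in $\zeta_k^2+\zeta_{k'}^2>\uzeta_k^2+\uzeta_{k'}^2$'', and that computation is wrong in the same case: for $s=6$, $t=4$ one has $\zeta_k^2+\zeta_{k'}^2=(q\log2+w_A)^2+w_A^2<2(q\log2+w_A)^2=\uzeta_k^2+\uzeta_{k'}^2$. Hence the minimality claim of Theorem~\ref{thm3}, and with it the optimality of $z(q)$ in Corollary~\ref{cor4}, is incorrect as stated (the inequality (\ref{azuma-hoeffding}) itself remains valid for every dyadic sequence under (\ref{eqsufcond}); replacing the $\alpha^{(6)}$ contribution $G(A,3)^2+2G(A,2)^2+2G(A,1)^2$ by $G(A,4)^2+G(A,2)^2+3G(A,1)^2$ already gives a strictly smaller valid bound). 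The constructive fix is not to seek discrete convexity of $C$ but to solve your dynamic program exactly for the small atom sizes that occur ($N^\beta(A)\le6$), or to restate the theorem with the true per-size optimal trees; as it stands, both your proposal and the paper's argument prove optimality of a construction that is not optimal.
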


\begin{proof}
Observe that the dyadic construction of the sequence of partitions
$\beta_{0,K}$ gives rise to the graph of a forest of binary trees 
with precisely one tree rooted at each atom of~$\alpha$. Every leaf of the 
$|\alpha|$ trees corresponds to an atom in~$\beta$. 
Every internal node in the graph has exactly two child nodes and 
corresponds to a~$B^k$ whose manner of splitting determines the 
associated~$\zeta_k$.  Henceforth, each node will be
identified with its $B^k$ and we shall use the terms node and
atom interchangeably. The first node to be split appears at level~$0$ 
in the graph while all the leaves are at level~$K$.  Each level 
corresponds to a partition in $\beta_{0,k}$: 
level~$k$ corresponds to $\beta_k$.  Finally, due
to the dyadic construction of the sequence of partitions, exactly 
one internal node appears at each level of the graph.

\medskip

As defined previously, each new level $k+1$ ($k=0,..,K-1$) is formed by 
taking a node~$B^k$ at the preceding level, which is
either~$A^k$ or has~$A^k$ as an ancestor, and splitting it into 
two child nodes, $\bar B^k$ and $B^k\setminus \bar B^k$. The nature of 
this split affects $\zeta_k$. Note that  
the root node~$A^k$ of the tree containing $B^k$ 
also plays a role in determining the value of $\zeta_k$. Each root
node~$A\in\alpha$ fixes the constant probability $q_A$ used to 
calculate all the $\zeta_k$'s associated with internal nodes in 
the tree rooted at~$A$.    
Hence, the effect of the probability distribution 
$q=(q_A: A\in\alpha)$ on the $\zeta_k$'s is not influenced 
in any way by the choice of $\beta_{0,K})$.

\medskip

Consequently, we can fix a sequence of partitions $\beta_{0,K}$ which
has been dyadically constructed as described in the preceding section, draw 
the associated graph and 
examine the $\zeta_k$'s corresponding to the nodes in a
particular tree independently of all the others.
So consider the tree rooted at some atom~$A\in\alpha$.  Define
$K(A)=\{k=0,..,K-1: B^k\subseteq A\}$ to be the set of indices at which nodes
belonging to the tree rooted at~$A$ are split.
Let $B^k$ for some $k\in K(A)$ be an internal node of this tree.
It was seen earlier that $\zeta_k$ is minimized by splitting~$B^k$ into~$\bar
B^k$ and $B^k\setminus \bar B^k$ such that $N^\beta(\bar B^k) = \lceil
N^\beta(B^k)/2\rceil$. Hence the size of~$B^k$ determines the minimum value 
of $\zeta_k$.  The actual composition of $\bar B^k$ and $B^k\setminus \bar B^k$ 
has no effect on $\zeta_k$. For the sake of brevity, we shall say that a node~$B^k$ is split in
half if $N^\beta(\bar B^k) = \lceil N^\beta(B^k)/2\rceil$.

\medskip

Now, it would seem that the obvious strategy for minimizing
$\sum_{k=0}^{K-1}\zeta_k^2$ would be to take each atom $A\in\alpha$ in turn and
recursively split it in half until all the leaf nodes are atoms of~$\beta$. 
This is the method described in the statement of the theorem.
Let $\uzeta_k$, $k\in K(A)$ be the values of the $\zeta_k$'s corresponding to nodes
in the tree rooted at~$A$ constructed using this recursive halving procedure:
$$
\uzeta_k = \min G_k = q_{A^k}\log\lceil N^\beta(B^k)/2\rceil +w_{A^k},
$$
where $w_A =\log2-q_A\log q_A$ for $A\in\alpha$.

\medskip
 
We will  show that this method does indeed minimize
$\sum_{k\in K(A)}\zeta_k^2$. Let $k^*$ be the first element in $K(A)$, that
is, where the root is split.  Then, $B^{k^*}=A$. $B^k$ and assume it splits into 
$B^k$ and $B^{k'}=A\setminus B^k$.  Suppose that~$A$ is not split in half, but $B^k$ and
$B^{k'}$ are subsequently split in half.  Let $s=N^\beta(A)$ and
$t=N^\beta(B^k)$. Without loss of generality, we assume that $B^k$ is the
larger of $A$'s two child nodes so that $t> \lceil s/2\rceil$. Then, we have
\begin{align*}
\zeta_{k^*} &= q_A\log t + w_A, \\ 
\zeta_k &= q_A\log\lceil t/2\rceil + w_A \text{ and} \\ 
\zeta_{k'} &= q_A\log\lceil(s-t)/2\rceil +w_A.
\end{align*}
Now, $\zeta_{k^*}$ only depends on~$A$, which is fixed by~$\alpha$, and $B^k$.
Since~$A$ is not split in half, $\zeta_{k^*}$ cannot attain its global
minimum value of $\uzeta_{k^*}=q_A\log\lceil s/2\rceil+w_A$.
In contrast, $\zeta_k$ (respectively $\zeta_{k'}$) depends on
how~$A$ was split in addition to how $B^k$ (respectively $B^{k'}$)
was split.  We have $\zeta_k>q_A\log\bigl\lceil\lceil s/2\rceil/2\bigr\rceil +
w_A=\uzeta_k$ and $\zeta_{k'}<q_A\log\bigl\lceil\lfloor s/2\rfloor/2\bigr\rceil
+ w_A=\uzeta_{k'}$.
A straightforward computation shows that any suboptimal split of~$A$ into atoms
of size $t$ and $s-t$ results in $\zeta_k^2+\zeta_{k'}^2>\uzeta_k^2+\uzeta_{k'}^2$.
Thus, even though $B^k$ and $B^{k'}$ are split in half, we will always have
\begin{equation}
\label{eq3zetabound}
\zeta_{k^*}^2+\zeta_k^2+\zeta_{k'}^2 
> \uzeta_{k^*}^2+\uzeta_k^2+\uzeta_{k'}^2.
\end{equation}

Once again, suppose that~$A$ is not split in half.   Now, if either 
$B^k$ or $B^{k'}$ is not split in half, then $\zeta_k$ and $\zeta_{k'}$ will be
at least as large as they were in the scenario described in the preceding
paragraph and thus $\zeta_k^2+\zeta_{k'}^2>\uzeta_k^2+\uzeta_{k'}^2$. Hence (\ref{eq3zetabound}) will continue to hold.
 
\medskip

Therefore, regardless of how the child nodes of~$A$ are split,
$\zeta_{k^*}^2+\zeta_k^2+\zeta_{k'}^2$ will always be greater than
$\uzeta_{k^*}^2+\uzeta_k^2+\uzeta_{k'}^2$ if~$A$ is split in half.
This argument is valid for all internal nodes of the tree, not just the root
node, and can be recursively applied to the tree in order to show that the
halving strategy will minimize $\sum_{k\in K(A)}\zeta_k^2$.  Since each tree is
 constructed independently of all the others, applying the strategy separately
 to each tree in the forest will minimize $\sum_{k=0}^{K-1}\zeta_k^2$.  This
 minimum value is $\sum_{k=0}^{K-1}\uzeta_k^2$.
   \end{proof}

\noindent\textbf{Note}. Single-node trees in the forest do not make any
contribution at all to $\sum_{k=0}^{K-1}\zeta_k^2$ as they have no internal nodes.

 \subsection*{\bf Large deviations for 
coding sequences}
	
Now, we can provide a more explicit calculation of
$\sum_{k=0}^{K-1}\uzeta_k^2$ for the coding regions of a genome.
Let~$\alpha=\{\text{Ala}, \text{Arg}, \ldots, \text{Val}\}$ be the set of 20
amino acids and $\D$ be the set of 61 non-STOP codons that code for the amino acids according to the
	standard genetic code.
		First of all, by setting $\beta=\D$, we can obtain
		Theorem~\ref{thm2} as it applies to the statistic~$\Delta$. 
In addition, recalling that $\max\{q_A: A\in\alpha\}<e^{-1}$ in bacterial
chromosomes, $\zeta_k=G_k$ for $k=0,..,K-1$ and $G_k$ simplifies to
$$
G_k = G\left(A^k, \max\{ |\bar B^k|, |B^k\setminus \bar B^k|\} \right),
$$
where
$$
G(A, t) = q_A\log t+\log2 -q_A\log q_A. 
$$
Note that $G(A,t)>\log2$ for all $A\in\alpha$ and $t\ge1$.

The set of amino acids can be grouped into 5 families according to how
many codons code for each amino acid.  Let $\alpha^{(l)}$ be the set of amino
acids which are coded for by~$l$ codons.  For example,
ATG is the only codon that codes for methionine (Met) while TGG is the
only one that codes for Tryptophan (Trp) and hence Met and Trp belong to
$\alpha^{(1)}$. The five families are:
 \begin{align*}
 \alpha^{(1)} &= \{\text{Met},\text{Trp}\} \\
 \alpha^{(2)} &= \{\text{Asn}, \text{Asp}, \text{Cys}, \text{Gln}, \text{Glu},
 \text{His}, \text{Lys}, \text{Phe}, \text{Tyr}\} \\
 \alpha^{(3)} &= \{\text{Ile}\} \\
 \alpha^{(4)} &= \{\text{Ala}, \text{Gly}, \text{pro}, \text{Thr},
 \text{Val}\}
 \\
 \alpha^{(6)} &= \{\text{Arg}, \text{Leu}, \text{Ser}\} \\
 \end{align*}

We can now present the following result which is a corollary of
Theorem~\ref{thm2}.

	\begin{corollary}
	\label{cor4}
	Let~$\alpha$ be the set of amino acids and~$\D$ the set of $61$ codons that
	code for amino acids according to the standard genetic code.  Assume the
	distribution $q=(q_A: A\in\alpha)$ satisfies $q_A<e^{-1}$ for all
	$A\in\alpha$.
	Then, for all $\lambda>0$, we have
	\begin{equation}
	\label{cotaX}
\PP\bigl(\Delta\le \EE(\Delta)-\lambda\bigr)
\le \exp\left( -\frac{\lambda^2}{2z(q)}\right),
\end{equation}
where
\begin{eqnarray}
\nonumber
z(q) = \sum_{k=0}^{K-1}\uzeta_k^2 
&=& \sum_{A\in\alpha^{(2)}\cup\alpha^{(3)}}G(A, 1)^2 +
2\sum_{A\in \alpha^{(4)}\cup\alpha^{(6)}}G(A, 1)^2 + \sum_{A\in
\alpha^{(3)}\cup\alpha^{(4)}}G(A,2)^2 \\
\label{eqGenomeZetaSS}
&&\quad + 2\sum_{A\in\alpha^{(6)}}G(A,2)^2 + \sum_{A\in\alpha^{(6)}}G(A,3)^2.
\end{eqnarray}
		\end{corollary}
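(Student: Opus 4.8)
The plan is to apply Theorem~\ref{thm2} with $\beta=\D$, using the optimal dyadic construction from Theorem~\ref{thm3} to compute $\sum_{k=0}^{K-1}\zeta_k^2$ explicitly. Since $q_A<e^{-1}$ for all $A\in\alpha$ by hypothesis, condition~(\ref{eqsufcond}) holds, so $\zeta_k=G_k$ for all $k$, and by Theorem~\ref{thm3} the minimal bound is attained by the recursive-halving construction, giving $\sum_{k=0}^{K-1}\uzeta_k^2$ with $\uzeta_k = q_{A^k}\log\lceil N^\D(B^k)/2\rceil + w_{A^k}$. Because each tree in the forest is built independently over the atoms $A\in\alpha$ and single-node trees contribute nothing, it suffices to enumerate, family by family, the internal nodes produced when an atom of size $|A|=N^\D(A)$ is recursively halved, and to record the value $G(A,t)=q_A\log t+\log2-q_A\log q_A$ attached to each such node (here $t=\max\{|\bar B^k|,|B^k\setminus\bar B^k\}\rceil$).

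The core of the argument is this enumeration. Atoms in $\alpha^{(1)}$ (Met, Trp) have $N^\D(A)=1$, so their trees are single nodes and contribute nothing. An atom $A$ with $|A|=2$ (the family $\alpha^{(2)}$ and, as a first split, $\alpha^{(3)}$, $\alpha^{(4)}$, $\alpha^{(6)}$) is split once into two singletons, with $t=1$, contributing one term $G(A,1)^2$. An atom with $|A|=3$ (family $\alpha^{(3)}$) is split into pieces of size $2$ and $1$ (so $t=2$, contributing $G(A,2)^2$), and the size-$2$ piece is then split into two singletons (contributing $G(A,1)^2$): total $G(A,1)^2+G(A,2)^2$. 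An atom with $|A|=4$ (family $\alpha^{(4)}$) is split into two pieces of size $2$ ($t=2$, one term $G(A,2)^2$), and each of those is split into singletons (two terms $G(A,1)^2$): total $2G(A,1)^2+G(A,2)^2$. An atom with $|A|=6$ (family $\alpha^{(6)}$) is split into two pieces of size $3$ ($t=3$, one term $G(A,3)^2$); each size-$3$ piece splits into a $2$ and a $1$ (two terms $G(A,2)^2$); each size-$2$ piece splits into two singletons (two terms $G(A,1)^2$): total $2G(A,1)^2+2G(A,2)^2+G(A,3)^2$. Summing these contributions over the atoms in each family and collecting terms by the value of $t$ yields exactly formula~(\ref{eqGenomeZetaSS}): the $G(A,1)^2$ terms come from $\alpha^{(2)}\cup\alpha^{(3)}$ once each and from $\alpha^{(4)}\cup\alpha^{(6)}$ twice each; the $G(A,2)^2$ terms come from $\alpha^{(3)}\cup\alpha^{(4)}$ once each and from $\alpha^{(6)}$ twice each; and the $G(A,3)^2$ terms come from $\alpha^{(6)}$ once each. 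Substituting $z(q)=\sum_{k=0}^{K-1}\uzeta_k^2$ into~(\ref{azuma-hoeffding}) gives~(\ref{cotaX}).

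The main obstacle is purely bookkeeping: one must verify that the recursive halving of a size-$6$ atom genuinely produces the node multiplicities claimed (one split at $t=3$, two at $t=2$, two at $t=1$) and similarly for size $4$ and size $3$, and then confirm that the five resulting partial sums regroup precisely into the five terms of~(\ref{eqGenomeZetaSS}) with the stated index sets. There is also a small point to check: since $|A^k|=N^\D(A^k)$ for atoms of codons, the quantity $\max\{|\bar B^k|,|B^k\setminus\bar B^k|\}$ appearing in the simplified $G_k$ equals $\lceil N^\D(B^k)/2\rceil$ under the halving rule, so the $G(A,t)$ values used in the enumeration coincide with the $\uzeta_k$ of Theorem~\ref{thm3}; this is immediate but should be stated. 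No genuinely hard estimate is involved — Lemma~\ref{sintetico} and Theorems~\ref{thm2} and~\ref{thm3} do all the analytic work, and what remains is to instantiate them on the genetic code.
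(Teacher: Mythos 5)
Your proposal is correct and follows essentially the same route as the paper's own proof: set $\beta=\D$ in Theorem~\ref{thm2}, invoke the recursive-halving construction of Theorem~\ref{thm3} (with $q_A<e^{-1}$ ensuring $\zeta_k=G_k$ in its simplified form), and enumerate the internal nodes of the trees rooted at atoms of sizes $1,2,3,4,6$ to collect the terms of~(\ref{eqGenomeZetaSS}). Your family-by-family node counts (none for $\alpha^{(1)}$; one $t=1$ split for size $2$; splits at $t=2,1$ for size $3$; $t=2,1,1$ for size $4$; $t=3,2,2,1,1$ for size $6$) match the paper's bookkeeping exactly.
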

	
	\begin{proof}
	Setting $\beta=\D$ in Theorem~\ref{thm2} causes $\Delta_q(\beta,P)$ to
	become~$\Delta$ in (\ref{azuma-hoeffding}). Next, use the
	recursive halving procedure given in Theorem~\ref{thm3} to construct a sequence
	of partitions $\beta_{0,K}$ that minimizes $\sum_{k=0}^{K-1}\zeta_k^2$.
Then, to complete the proof, we must show that $z(q)$ takes the form given
in~(\ref{eqGenomeZetaSS}).
	
\medskip

The graph associated with $\beta_{0,K}$ is a forest containing $20$
trees and has $K=|\D|-|\alpha|=61-20=41$ internal nodes and $61$ leaves.
	The trees rooted at Met and Trp, do not participate in determining the bound
	since they have no internal nodes.
	
	\medskip
	
	All the trees rooted at atoms in~$\alpha^{(2)}$ consist of a root node of
	size~$2$ and two leaf nodes.  Therefore, each $\zeta_k$ associated with
	$A\in\alpha^{(2)}$  will be
	$G(A, 1)=q_A\log1 + \log2-q_A\log q_A = \log2-q_A\log q_A$.  Consequently,
	these $\zeta_k$'s contribute
the following nine terms to $z(q)$:
$z_2(q)=\sum_{A\in\alpha^{(2)}}G(A, 1)^2$.

\medskip

Similarly, the tree rooted at Ile has two internal nodes (one of
size~$3$ and one of size~$2$) and $3$ leaf nodes.  Thus, Ile contributes $2$ terms to $z(q)$:
$z_3(q)=G(\text{Ile}, 1)^2+G(\text{Ile}, 2)^2$.
	
	\medskip
	
	Next, the five trees rooted at atoms in $\alpha^{(4)}$ have three internal
	nodes (one of size $4$ and two of size $2$) and 4 leaves.  Each contributes three terms to
	$z(q)$: $z_4(q)=\sum_{A\in\alpha^{(4)}}\bigl(G(A,2)^2+2G(A,1)^2\bigr)$.
	
	\medskip
	
	Finally, each of the trees rooted at an $A\in\alpha^{(6)}$ has five
	internal nodes, one of size~$6$, two of size~$3$ and two of size~$2$.  The sum
	of the squares of the $\zeta_k$'s associated with the five nodes of the tree
	rooted at~$A$ is $G(A,3)^2 + 2G(A,2)^2 + 2G(A,1)^2$.  Like $\alpha^{(4)}$,
	$\alpha^{(6)}$ accounts for fifteen terms of $z(q)$:
$z_6(q)=\sum_{A\in\alpha^{(6)}} \bigl(G(A,3)^2+2G(A,2)^2+2G(A,1)^2\bigr)$.

\medskip
	
	The proof is completed by rearranging $z(q) = z_2(q)+z_3(q)+z_4(q)+z_6(q)$. 
	\end{proof}

\begin{remark} 
\label{rembound} 
The corollary gives a bound on 
$$ 
\PP(\Delta\le \EE(\Delta)-\lambda).
$$ 
Obviously, the bound $\exp\left(-\lambda^2/2\,z(q)\right)$ only makes sense 
for $\lambda\in [0,\EE(\Delta)]$. Now, since $z(q)$ is the sum of $41$ terms 
of the form $G(A,t)>\log2$, we have $z(q)>41\cdot\log2$. On the other hand, 
$\EE(\Delta)\le 0.2773$ for the set of $2535$ chromosomes we have examined. 
Thus, the bound on the right-hand side of~(\ref{cotaX}) will be greater than or 
equal to
$$
\exp\left( -\frac{0.2773^2}{2\cdot 41\cdot\log2}\right) \approx 0.998648.
$$
Unfortunately, this means that the bound is not sufficiently 
tight for practical use.
\end{remark}

\begin{figure}[ht]
\centering
\subfigure[Density estimated from 2535 bacterial
chromosomes.]{\includegraphics[width=0.99\textwidth]{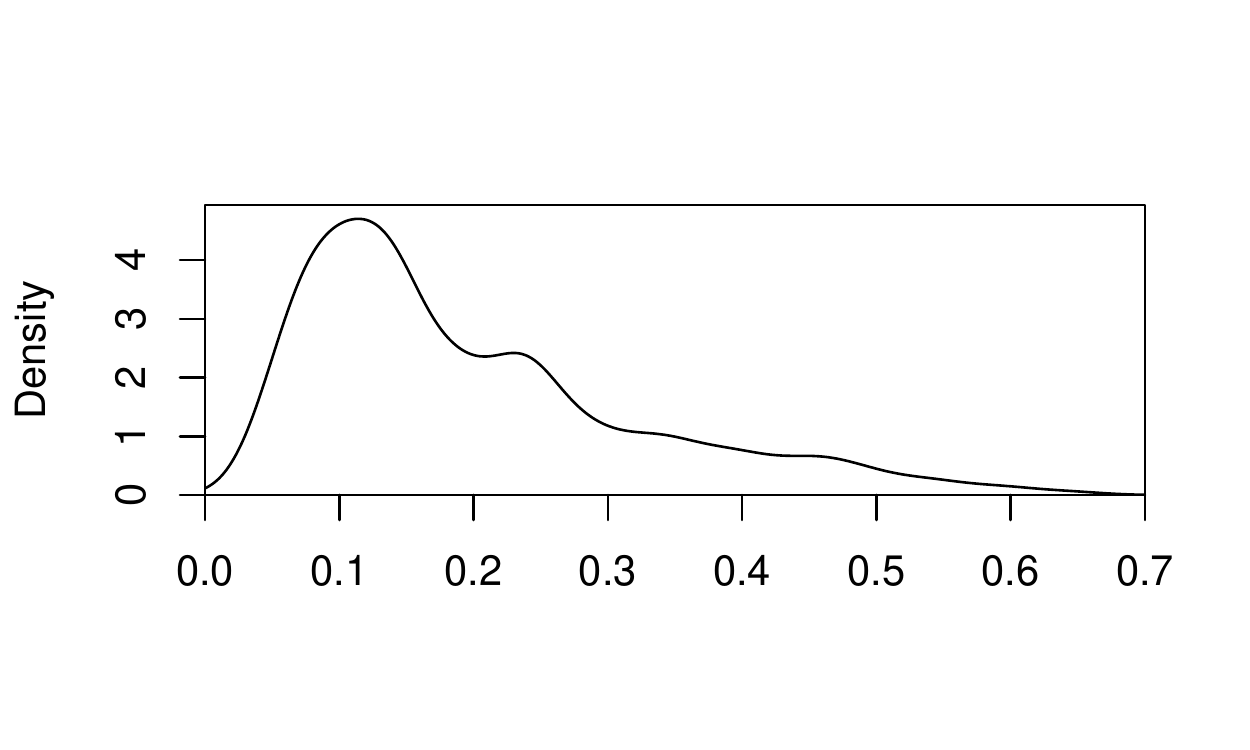}
\label{fig:delta.a}}

\subfigure[Density estimated from $10^4$ random
simulations.]{\includegraphics[width=0.99\textwidth]{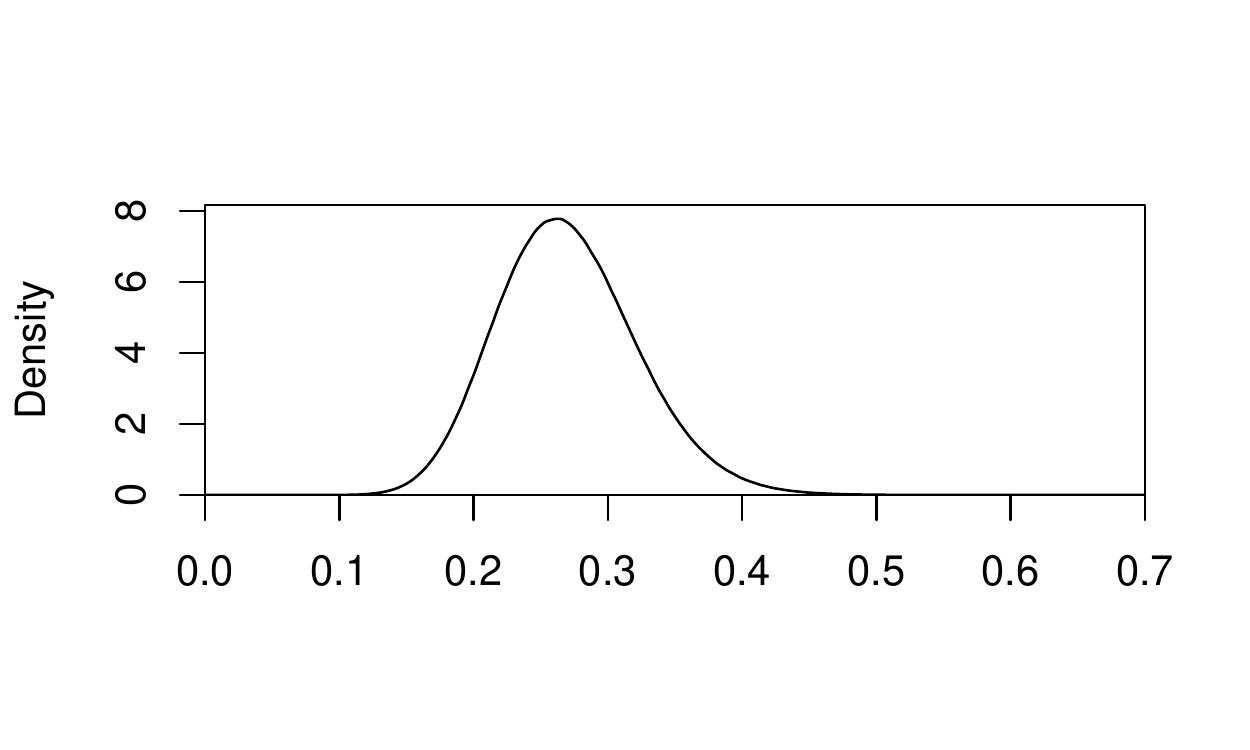}
\label{fig:delta.b}}
\caption{Distribution of $\Delta$ in real and randomly simulated data.}
\label{fig:delta}
\end{figure}

\section{Application and Comments}
\label{sec5}

\subsection*{\bf The data}

We downloaded a large set of 2585 bacterial DNA sequences
from the NCBI ftp server.  All of the sequences were marked as `complete
genome' or `nearly complete genome', so constitute chromosomes 
and not plasmids. Chromosomes which were either lacking
annotation data or which had fewer than 200 coding sequences, of 
which there were 8 and 45 respectively, were filtered out.
this left a set of 2535 chromosomes. Next, the codon distribution
($r$) was estimated from the relative frequencies of the codons 
for each chromosome by adding 
up the counts of codons in all the genes annotated in the 
GenBank (\texttt{.gbk}) file and rescaling the results to sum to one. 
For example, the numbers of AAA codons appearing in each gene were 
added together and then divided by the total number of codons contained 
in all the
genes in the chromosome. The corresponding amino acid distribution ($q$) 
was computed by summing up the relative frequencies of the synonymous 
codons for each amino acid.
Finally, the relative frequencies of the codons and amino acids were used 
to compute the ~$\Delta$ statistic for each chromosome in accordance
with (\ref{eqn:delta.stat}) and (\ref{eqII1}).

\medskip

In order to compare the $\Delta$'s obtain from real-world chromosome data
with the behavior of the~$\Delta$ statistic over the
complete range of codon distributions, we sampled 
$10^7$ codon distributions
uniformly and then computed the value of~$\Delta$ for each. 
Such a large sample
was utilized in order to capture sufficient detail at the extremes of the
distribution. A codon distribution can be sampled uniformly by simulating a
probability vector of length $61$ from a 
Dirichlet$(\underbrace{1,..,1}_{\text{\rm $61$ times}})$ distribution.  
Each of the $61$ components corresponds to a  single non-STOP codon.  

\medskip

\figref{fig:delta} shows estimates of the distribution of~$\Delta$ for
(a) the collection of 2535 chromosomes and (b) the simulated codon
distributions.  The range of~$\Delta$ 
is similar for both the set of chromosomes and the simulated data. 
However, the $\Delta$'s for the chromosomes are skewed towards the no CUB
end of the spectrum with fatter tails compared to the $\Delta$'s
based on uniformly sampled codon distributions, which present a symmetric aspect.
This suggests that the codon distributions for bacterial chromosomes 
are concentrated in a particular
region in the space of all possible codon distributions.

\medskip

A principal component analysis (PCA) of the codon distributions lends 
further evidence to support this conjecture. 
The first principal component is by far the most important, 
accounting for approximately $75$\% of variability in the 
codon distributions, while the first $10$ principal components 
explain about $95$\% of the total variability. This means that the codon 
distributions are essentially contained within a $10$-dimensional 
region inside the $61$-dimensional set of all theoretically 
possible codon distributions. Performing PCA on the amino acid
distributions yields a similar picture:
$80$\% of the variability is explained by the first principal 
component while the first $10$ components explain $98$\% of the total 
variability in the amino acid distributions.  

\medskip

Next, by examining where the~$\Delta$ computed for each bacterial chromosome 
lies in the set of all codon distributions that extend the amino acid 
distribution for that chromosome, an interesting phenomenon can be revealed. For 
each chromosome, we estimated $\Delta^*$, which is defined as follows: 
$\Delta^*$ is the proportion of all codon distributions that are compatible with 
the chromosome's amino acid distribution and that would give rise to a~$\Delta$ 
smaller than the chromosome's empirically estimated~$\Delta$. 

\medskip

In other words, we are interested  in the probability of observing a chromosome 
with less CUB than the chromosome under consideration and this probability is 
calculated naively by giving each possible codon distribution the same chance of 
being observed. Unfortunately, the bounds based on large deviations developed in 
Corollary~\ref{cor4} are not sharp enough to serve here (see 
Remark~\ref{rembound}). So instead, the probability for each chromosome  was 
estimated by carrying out a series of Monte Carlo simulations in which $10^4$ 
codon distributions were sampled uniformly at random subject to having the same 
amino acid distribution as the chromosome.

\medskip

\begin{figure}[ht]
\centering
\subfigure[Density for 2535 bacterial
chromosomes.]{\includegraphics[width=0.99\textwidth]
{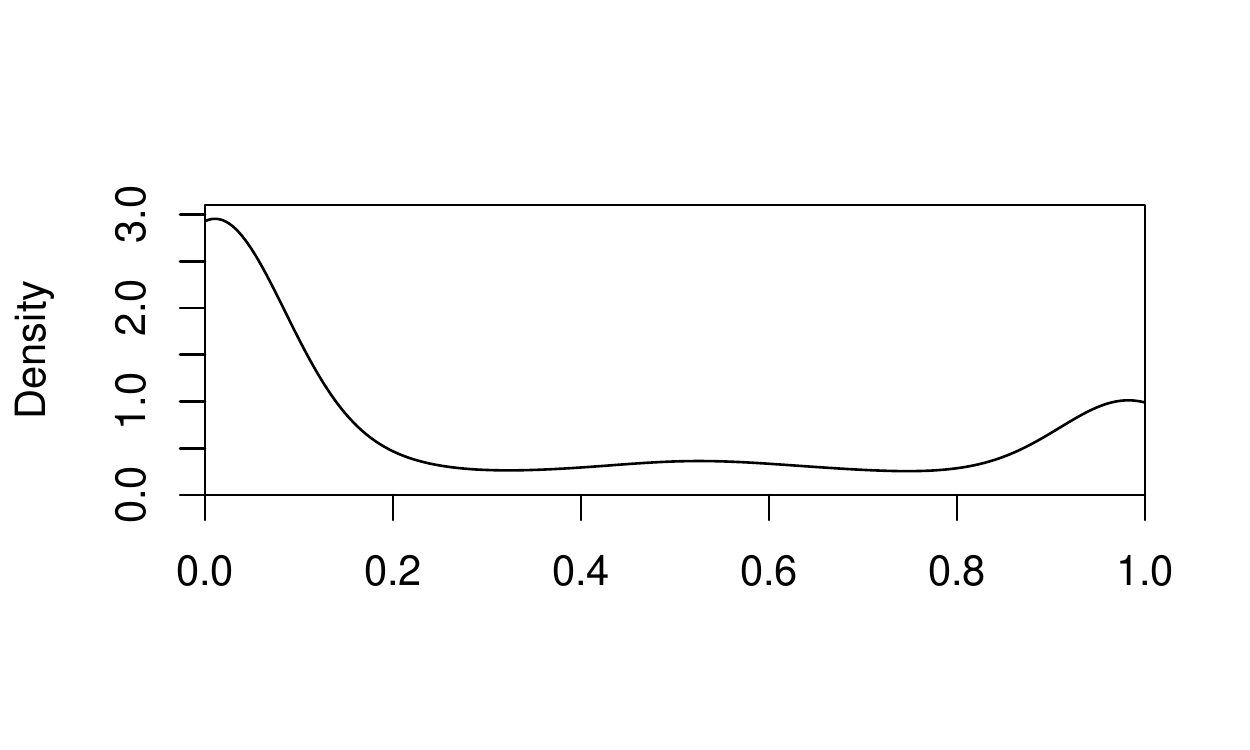}
\label{fig:prob.delta.a}}

\subfigure[Density for $10^4$ random
simulations.]{\includegraphics[width=0.99\textwidth]{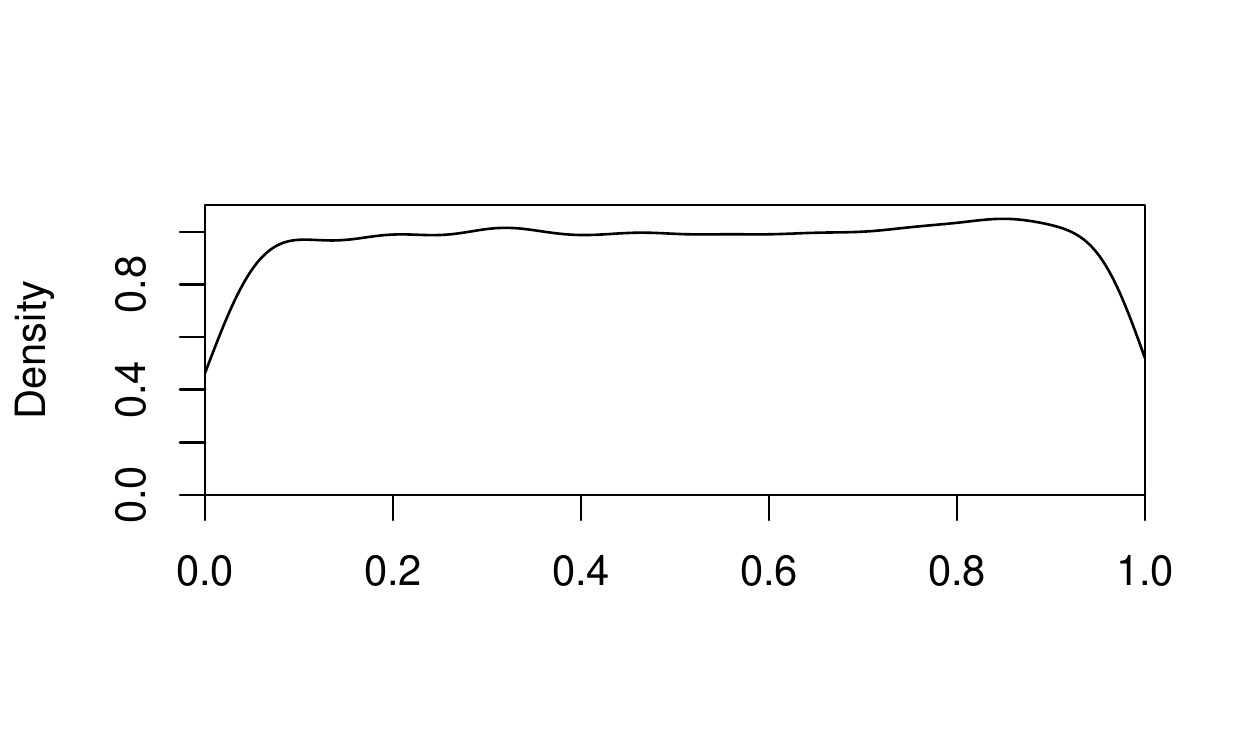}
\label{fig:prob.delta.b}}
\caption{Distributions of probabilities of observing a value smaller 
than $\Delta$ in real and randomly simulated data.}
\label{fig:prob.delta}
\end{figure}

The algorithm we used for uniformly sampling codon distributions
that extend a specified amino acid distribution can be described as follows. 
First fix the amino acid distribution $q=(q_A: A\in\alpha)$.  Then, a
distribution $p_{\cdot|A}=(p_{i|A}: i\in A)$ for the usage of the
synonymous codons that code for each amino acid~$A$ is sampled from the
appropriate Dirichlet distribution. As an example, 
a distribution for the two codons
(GAC and GAT) that code for aspartic acid (Asp), is obtained by sampling
from a Dirichlet$(1,1)$ distribution.
The final codon distribution $r=(r_i: i\in I)$ is then obtained by combining
these two distributions as follows:
$$
\forall i\in I, \quad r_i = p_{i|A^{i,\alpha}}q_{A^{i,\alpha}}.
$$
In other words, take an amino acid distribution and uniformly select a
possible SCU distribution given each amino acid in turn.  After
multiplying the SCU distribution for each amino acid by the probability of
the corresponding amino acid, the result is a codon distribution which is
a discrete extension of the original 
amino acid composition, making it possible
to compute $\Delta^q(\D,r)$ for the codon distribution.  the probability
$\Delta^*$ was then obtained as the fraction of the $10^4$ simulated codon
distributions whose~$\Delta$ is smaller than or equal to that of the
chromosome.

\medskip

\figref{fig:prob.delta.a} displays a kernel density 
estimate of the distribution
of~$\Delta^*$ for the collection of bacterial chromosomes.  Similarly,
\figref{fig:prob.delta.b} shows the distribution of~$\Delta^*$ for
uniformly sampled codon distributions. Three concentrations of
chromosomes are apparent in (a), the central one being fairly amorphous 
while the extreme concentrations stand out prominently. In contrast, 
$\Delta^*$ is essentially uniformly distributed in~(b).
In addition to occupying a lower 
dimensional subspace of the set of all 
codon distributions, there would seem to be something
particular about the way codon distributions 
for bacterial chromosomes are arranged.

\medskip

\begin{table}[ht]
\caption{Clustering of 2535 bacterial chromosomes into 
three groups according to 
$\Delta^*$, the proportion of codon distributions with less CUB than the
chromosome. The grouping was obtained via k-means clustering and 
the between-cluster sum of
squares accounts for $96.8\%$ of the total sum of squares.}
\label{tab:clusters}
\centering
\begin{tabular}{rlr|rrr|rrr}
  \hline
 Group & Bias & Size & \multicolumn3c{$\Delta^*$} & \multicolumn3c{$\Delta$}
 \\
& & & Center & Min. & Max. & Center & Min. & Max. \\ 
  \hline
1 & low & 1587 & 0.0269 & 0.0000 & 0.2661 & 0.1182 & 0.0262 & 0.2140 \\ 
  2 & moderate & 356 & 0.5086 & 0.2699 & 0.7240 & 0.2372 & 0.2042 & 0.2709 \\ 
  3 & high & 592 & 0.9556 & 0.7362 & 1.0000 & 0.3921 & 0.2645 & 0.6629 \\ 
   \hline
\end{tabular}
\end{table}
					
We used $k$-means clustering to assign each chromosome to one of three
groups according to the value of $\Delta^*$.  
The main characteristics of the resulting groups,
which we nominally named `low CUB', `moderate CUB'and `high CUB', are
summarized in Table~\ref{tab:clusters}.  The table gives the size 
of each group, the range of $\Delta^*$ and 
$\Delta$ spanned by each group, as well as the center (mean)
value of~$\Delta^*$ and~$\Delta$.  
Observe that the ranges of $\Delta^*$
covered by the three groups are disjoint, which is a side effect of the
$k$-means clustering procedure.  On the other hand, the ranges 
of~$\Delta$ that the groups encompass overlap slightly. 
This phenomenon is 
attributable to the theoretical equal weighting applied to all codon
distributions during the calculation of~$\Delta^*$.

\medskip

\begin{figure}[ht]
\centering
\includegraphics[width=\textwidth]{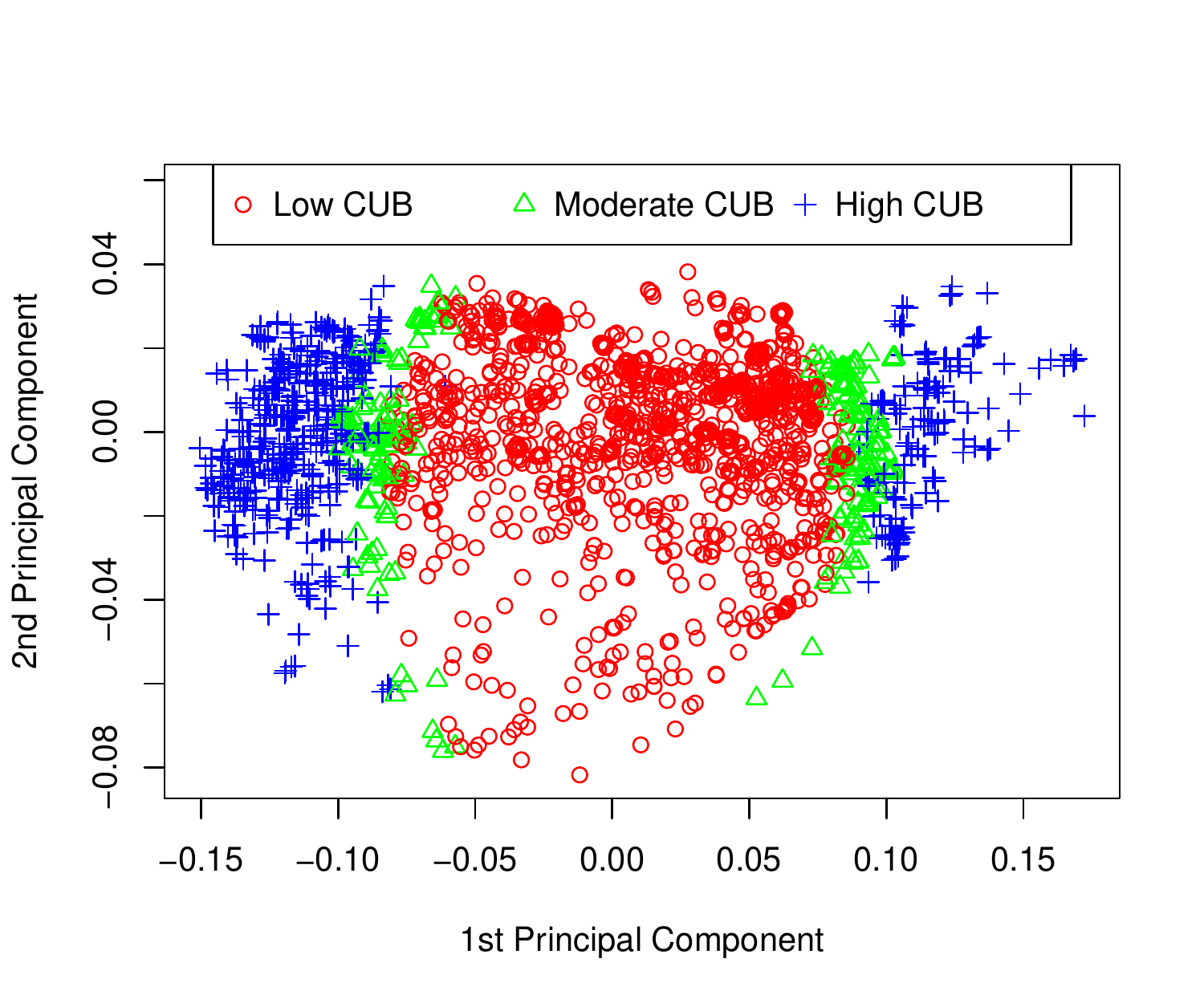}
\caption{Plot of the first two principal components of the codon 
distributions of 2535 bacterial chromosomes.}
\label{fig:pca}
\end{figure}

Next, we plotted the first two principal components of the codon 
distributions, indicating to which group each chromosome belongs. The
most significant feature of this plot (see \figref{fig:pca}) is that 
the plot is roughly broken into vertical bands according to membership 
in the `low CUB', ~moderate CUB'or `high CUB' group. Thus, the 
first principal component captures substantial information concerning 
CUB in the chromosome. We believe that this may be the first 
time that this characteristic of the first principal component 
has been demonstrated at the chromosome level.
At present, we have no satisfactory biological explanation for the 
division of bacterial chromosomes into three groups based on CUB.

\section*{Acknowledgments}

This work was supported by the Center for 
Mathematical Modeling's (CMM) CONICYT 
Basal program PFB 03. The authors 
would like to thank members of the CMM Mathomics Laboratory for
interesting and helpful discussions.


\end{document}